\documentclass[aps,pra,twocolumn,floatfix,nofootinbib,superscriptaddress,longbibliography,10pt]{revtex4-2} 
\usepackage{bm,graphicx,mathrsfs,amsmath,amssymb,mathtools,makecell,bbm, amsthm,dsfont,color,times,txfonts,nicefrac,framed,enumitem,tikz,physics,wrapfig,amsfonts,lipsum,nicematrix}
\usepackage[most]{tcolorbox}
\usepackage[dvipsnames]{xcolor}
\usepackage[colorlinks=true,linkcolor=NavyBlue,citecolor=NavyBlue]{hyperref}

\usepackage{etoolbox}

\makeatletter
\patchcmd{\@bibdataout@aps}
  {author="08"}
  {author="48"}
  {}{}
\makeatother

\hypersetup{urlcolor=NavyBlue}

\makeatletter
\def\blfootnote{\gdef\@thefnmark{}\@footnotetext}
\makeatother

\renewcommand{\v}[1]{\ensuremath{\boldsymbol #1}}
\newcommand{\ms}[1]{\textsf{#1}}
\newcommand{\iden}{\mathbbm{1}}
\newcommand{\STAB}{\operatorname{STAB}}
\newcommand{\conv}{\operatorname{conv}}
\newcommand{\bx}{\bar{x}}
\newtheorem{thm}{Theorem}

\newtheorem{lem}[thm]{Lemma}

\newtheorem{cor}[thm]{Corollary}

\begin{document}

\title{When Symmetry Suppresses Magic}

\author{A. de Oliveira Junior}
	\affiliation{Center for Macroscopic Quantum States bigQ, Department of Physics,
Technical University of Denmark, Fysikvej 307, 2800 Kgs. Lyngby, Denmark}
\email{alexssandredeoliveira@gmail.com}
\author{Jake Xuereb}
	\affiliation{Vienna Center for Quantum Science and Technology, Atominstitut, TU Wien, 1020 Vienna, Austria}
\author{Rafael A. Macêdo}
    \affiliation{Department of Mathematical Sciences, University of Copenhagen, Universitetsparken 5, 2100 Copenhagen,
Denmark}
\author{Jonatan Bohr Brask}
    \affiliation{Center for Macroscopic Quantum States bigQ, Department of Physics,
Technical University of Denmark, Fysikvej 307, 2800 Kgs. Lyngby, Denmark}
\author{Rafael Chaves}
    \affiliation{International Institute of Physics, Federal University of Rio Grande do Norte, 59078-970, Natal, RN, Brazil}
\date{\today}

\begin{abstract}
Nonstabilizerness is a critical resource for quantum advantage, but evaluating it, especially for mixed states, requires superexponentially many samples in system size, making the problem NP-hard. While symmetries are known to reduce this complexity, it is unclear whether they also restrict the \textit{amount} of magic. In this work, we provide an explicit example of such a symmetry by proving that the Robustness of Magic (RoM) for $N$-qubit \ms{X}-states is at most $\sqrt{3}$. Leveraging this symmetry constraint, we introduce a computationally efficient method to lower-bound the RoM of arbitrary many-body states, demonstrating its utility on the ground states of a spin-$\tfrac12$ Hamiltonian with system sizes well beyond the reach of exact evaluation. Furthermore, for Hamiltonians whose equilibrium states are \ms{X}-states, we analytically derive the critical temperature at which magic emerges. Our work shows why certain symmetries constrain magic while others do not, provides a scalable lower-bounding method, and identifies a nontrivial regime where many-body nonstabilizerness is analytically solvable.
\end{abstract}

\maketitle

\paragraph*{Introduction---}

Symmetries in nature provide regularities that allow us to make sense of natural phenomena. They give rise to conservation laws~\cite{Noether01011971}, simplify calculations~\cite{Eckart1930,Wigner1959}, and reduce the complexity of simulating many-body systems~\cite{vidal_07,vidal_08,Singh2010,Renner2007}. This naturally leads us to ask how symmetries affect quantum resources. Of particular interest is nonstabilizerness (or `magic'), the resource that captures the hardness of classically simulating quantum states~\cite{gottesman1998heisenberg,aaronson2004improved,bravyi2005universal,bravyi_16,lima_24,leone2026,qian_24}.

While magic typically grows extensively with the system size $N$~\cite{Liu2022} and requires $4^N$ Pauli expectation values to be quantified~\cite{veitch2014resource,Howard2017,leone2022stabilizer}, symmetries can mitigate this exponential experimental and computational cost. For instance, permutation symmetry reduces the number of required observables to $\mathcal{O}(N^3)$~\cite{Passarelli2024}, while gauge constraints, conserved charges, and symmetries of the stabilizer polytope offer similar numerical simplifications~\cite{Cepollaro2024,Ruyter2025b,iannotti2026,sabharwal2026stuffmagiconcompact,Heinrich2019robustnessofmagic,Hamaguchi2024handbookquantifying}. Beyond simplifying computations, symmetries can either protect or suppress nonstabilizerness~\cite{Ellison2021symmetryprotected,iannotti2026}. However, the use of symmetries to quantify magic has so far remained largely numerical, leaving open when symmetries admit analytical solutions and impose universal bounds on the magic a quantum state can possess. Here, we show that parity symmetry accomplishes both by making nonstabilizerness exactly computable while placing fundamental limits on the amount of magic a many-body state can carry.

Consider a system of $N$ interacting spin-$\tfrac12$ particles whose state obeys a parity symmetry. Neighboring spins may point in the same or opposite directions along the $z$-axis, but this symmetry allows quantum coherence only between configurations that differ by flipping all spins at once. Such symmetry arises naturally in both equilibrium and nonequilibrium settings. Examples include Gibbs states of diagonal Ising Hamiltonians and two-site marginals of equilibrium states of a broader class of spin-chain models, such as anisotropic $XY$ and transverse-field Ising chains~\cite{Osborne2002,Sarkar2020}. It also appears during the dynamics of quantum impurity models~\cite{Cavalcante2025}, in nonequilibrium steady states of quantum thermal machines~\cite{brask2015autonomous,Khandelwal2020} and in protocols converting athermality into entanglement~\cite{deOliveiraJunior2024}. The density matrix of such a system takes the form of an $\ms{X}$-state~\cite{YuEberly2007,Rau2009,Vinjanampathy2010,huber_12} in the $z$-basis, with nonzero entries only on the diagonal and anti-diagonal. Our main result is that this parity symmetry strongly constrains nonstabilizerness, giving a closed-form expression and a universal upper bound for the robustness of magic (RoM)~\cite{Howard2017}. 

\begin{figure*}
    \centering
    \includegraphics{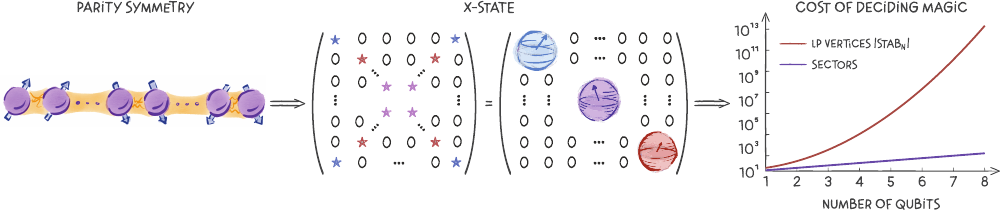}
    \caption{\emph{This paper in a nutshell.} A state obeying the parity symmetry has nonzero matrix elements only on the diagonal and antidiagonal. Such a state decomposes into two-dimensional sectors, each behaving as an effective qubit. Magic can consequently be identified and quantified separately in each sector. This replaces the linear programme over the super-exponentially pure stabilizer states with fixed-size calculation per sector. The number of sectors remains exponential in the number of qubits, but the resulting calculation is linear in the description of the state. }
    \label{F:nuthsell-paper}
\end{figure*}

We first show that $N$-qubit $\ms X$-states decompose into independent blocks, reducing the many-body problem to a set of independent single-qubit problems. The resulting formula requires a constant number of operations per block, so its cost scales linearly with the state's dimension while avoiding optimization over the superexponentially many pure stabilizer states (see Fig.~\ref{F:nuthsell-paper} for a summary). We then prove that the RoM of any $\ms X$-state is bounded above by $\sqrt{3}$, independently of the system size, showing that this symmetry permits at most a single qubit's worth of magic. Next, we introduce the $\ms{X}$-twirl to bound the RoM of arbitrary states from below. We apply it to ground states of the transverse-field Ising model with up to 15 spins, far beyond the reach of direct computation. Turning to dynamics, we classify the Hamiltonians compatible with this symmetry and use our framework to derive an analytical expression for the critical temperature at which magic emerges in equilibrium.

\paragraph*{Parity symmetry \& antipodal sectors---} The symmetry we exploit records the relative orientation of neighboring spins along the $z$ axis via the commuting Pauli operators $G_i:=Z_i Z_{i+1}$ for $i=1, \ldots, N-1$. Each $G_i$ distinguishes whether the spins at sites $i$ and $i+1$ are aligned or anti-aligned. Together, the $N-1$ eigenvalues determine a bit string up to a global spin flip, so each joint eigenspace is spanned by a pair of bitwise-complementary states.

For $x = (x_1, \ldots, x_N)$ let $\bx = (1\oplus x_1, \ldots, 1\oplus x_N)$ denote its bitwise complement. The strings $x$ and $\bx$ have the same relative-parity eigenvalues, since flipping every spin leaves each $G_i$ unchanged. We call $(x,\bx)$ an antipodal pair, since the two strings differ at every site. The sector $\mathcal{B}_x:=\operatorname{span}\{\ket{x},\ket{\bx}\}$ is the eigenspace in which each $G_i$ has eigenvalue $\varepsilon_i^x:=(-1)^{x_i\oplus x_{i+1}}$. Its projector is $\Pi_x:=\prod_{i=1}^{N-1}\tfrac12(\iden+\varepsilon_i^x G_i)$. To label each sector once, we choose the representative with $x_1=0$ and introduce $\mathcal{X}_N:=\{x\in\{0,1\}^N:x_1=0\}$. Since distinct sectors have different relative-parity eigenvalues, a state commuting with every $G_i$ cannot contain coherence between them, and therefore
\begin{equation}\label{Eq:X-states}
    [\rho,G_i]=0 \:\:\text{for all $i$}\:\Longleftrightarrow\: \rho =\bigoplus_{x\in\mathcal{X}_N} \begin{pmatrix}
        r_x & \alpha_x \\ \alpha_x^* & r_{\bx}
    \end{pmatrix}_{\mathcal{B}_x},
\end{equation}
with $r_x:=\langle x|\rho|x\rangle$ and $\alpha_x:=\langle x|\rho|\bx\rangle$. Normalization requires $\sum_{x\in\mathcal{X}_N}(r_x+r_{\bx})=1$, while positivity requires $r_x, r_{\bx} \geq 0$ and $|\alpha_x|^2 \leq r_x r_{\bx}$ in every sector. This is the structure of an $\ms{X}$-state~\cite{YuEberly2007, Rau2009}. In the computational basis, only diagonal elements and coherences between bitwise-complementary states can be nonzero. Equivalently, the $G_i$ are the stabilizer generators of the $N$-qubit repetition code \cite{gottesman1997stabilizer}, and the $\mathcal{B}_x$ are its $2^{N-1}$ syndrome subspaces, each carrying one effective qubit. Within each sector, we set $\ket{0_L}=\ket{x}$, $\ket{1_L} = \ket{\bx}$, and define the effective Pauli operators
\begin{equation}\label{Eq:logical-paulis}
    \begin{split}
    X^{(x)}&=\ket{x}\!\bra{\bx}+\ket{\bx}\!\bra{x},\\ 
    Y^{(x)}&=-i(\ket{x}\!\bra{\bx}-\ket{\bx}\!\bra{x}),\\
    Z^{(x)}&=\ketbra{x}-\ketbra{\bx}.
\end{split}
\end{equation}
Writing $\v \sigma^{(x)} = (X^{(x)},Y^{(x)},Z^{(x)})$, the restriction of $\rho$ to this sector takes the Bloch form $\rho_x:=\Pi_x \rho \Pi_x = \tfrac12(p_x \Pi_x + \v a_x \cdot \v\sigma^{(x)})$, where $p_x :=r_x+r_{\bx}$ is the sector population and $\v a_x:= (2\operatorname{Re} \alpha_x, -2\operatorname{Im} \alpha_x, r_x-r_{\bx})$ is the subnormalized Bloch vector of the effective qubit. For later use, we define $L_x:=\|\v a_x\|_1$, $m_x:=\min\{r_x, r_{\bx}\}$ and $s_x:=|\operatorname{Re}\alpha_x| + |\operatorname{Im}\alpha_x|$, so that $L_x=2s_x+|r_x-r_{\bx}|$ and $p_x - |r_x-r_{\bx}| = 2m_x$. 

\paragraph*{Magic under parity symmetry--} To quantify the magic, we use the robustness of magic (RoM)~\cite{Howard2017}, which applies to both pure and mixed states. Operationally, the RoM dictates the sampling overhead in classical simulations based on signed stabilizer decompositions~\cite{Pashayan2015,Howard2017}. Let $\operatorname{STAB}_N$ denote the set of $N$-qubit pure stabilizer projectors. In its stabilizer norm form, the RoM is \mbox{$\mathcal{R}(\rho) = \min \qty{\sum_k |l_k| : \rho = \sum_k l_k \tau_k, \tau_k \in \operatorname{STAB}_N}$} with real coefficients $l_k$. It satisfies $\mathcal{R}(\rho)\geq 1$ for every state, with $\mathcal{R}(\rho)>1$ when $\rho$ is magical. Evaluating the RoM directly requires optimization over the set of stabilizer projectors, whose cardinality grows superexponentially with $N$. 

For $\ms{X}$-states, parity symmetry removes the need for this optimization. Our main result, proven in Appendix~\ref{S:SM-2}, gives a closed-form expression for the RoM. Namely, for any $N$-qubit $\ms{X}$-state
\begin{align}\label{Eq:main-result-RoM}
   \mathcal{R}(\rho) &= \sum_{x\in \mathcal{X}_N} \max\{p_x, L_x\} = 1+2\sum_{x\in \mathcal{X}_N}(s_x-m_x)_+,
\end{align}
where $(\bullet)_+ := \max\{0,\bullet\}$. It also gives the exact membership criterion (see~Appendix~\ref{S:SM-2} for details)
\begin{equation}\label{Eq:necessary-and-sufficient condition}
    \rho \in \conv(\STAB_N) \Longleftrightarrow s_x \leq m_x \:\:\text{for all $x$}.
\end{equation}
An $\ms{X}$-state is therefore magical when the normalized effective qubit state in at least one populated sector lies outside its stabilizer octahedron. The criterion follows because the six effective Pauli eigenstates in each sector are physical $N$-qubit stabilizer states (see Lemma~\ref{Lem:stabilizer-states-antipodal} in App.~\ref{S:SM-1}), whereas projecting any global convex stabilizer decomposition onto a fixed sector constrains its subnormalized effective Bloch vector to the corresponding octahedron (see Lemma~\ref{Lem:sector-stabilizer-projection} in App.~\ref{S:SM-1}). The RoM formula also has a direct interpretation. For each populated sector, let $\hat{\rho}_x:=\rho_x/p_x$ denote the corresponding normalized effective-qubit state. Eq.~\eqref{Eq:main-result-RoM} then states that the RoM of the $N$-qubit state $\rho$ is the average of the sector RoMs, $\mathcal{R}(\rho) =\sum_{x:p_x>0} p_x \mathcal{R}(\hat\rho_x)$, weighted by the sector probabilities. The sectors contribute independently, so magic does not accumulate across them. As a consistency check, for a single nonzero GHZ coherence, the membership criterion reduces to that obtained in Ref.~\cite{Cao2026}, while its exact RoM formula for real coherence is recovered as a special case.

Reducing the state to effective qubits also limits how much magic it can have. Since the RoM of every normalized effective-qubit state is at most $\sqrt{3}$ and $\sum_xp_x=1$, the preceding identity implies that $\mathcal{R}(\rho) \leq \sqrt{3}$ for every $N$ (see Corollary~\ref{cor:uniform-bound}). As shown in Appendix~\ref{S:SM-2}, equality holds if and only if every populated sector is pure after normalization and has effective Bloch vector $\tfrac{\v{a}_x}{p_x} = \pm \tfrac{1}{\sqrt{3}}(1,1,1)$, where each component sign may be chosen independently in every sector. Maximal magic can be attained either by a pure state supported in a single sector or, for $N\geq2$, by a mixed state distributed over several orthogonal sectors. The bound does not rule out many-body coherence. For example, a sector may contain a GHZ-like superposition of two macroscopically distinct strings. What is suppressed is the accumulation of magic across sectors. The symmetry confines any surviving coherence within each sector to one effective qubit, and the different sectors enter the computation only via the average above.

\paragraph*{A magic witness for arbitrary states--} The same symmetry gives a certified bound for an arbitrary state $\sigma$. Let
\begin{equation}
    \Lambda_{\ms{X}}(\sigma) = \frac{1}{2^{N-1}}\sum_{\substack{w\in\{0,1\}^N \\ |w| \:\text{even}}}Z_w \sigma Z_w,
\end{equation}
where $Z_w:=\bigotimes_{i=1}^N Z_i^{w_i}$ and $|w|:=\sum_{i=1}^N w_i$ is the Hamming weight of $w$. The even-weight $Z_w$ form a group generated by $G_i$, so $\Lambda_{\ms{X}}$ is the twirl over the symmetry. Averaging over them cancels all matrix elements except those on the diagonal and antidiagonal, which are left unchanged, so $\Lambda_{\ms{X}}(\sigma)$ is an $\ms{X}$-state. Since the twirl is a convex mixture of Clifford conjugations, it cannot increase the RoM. Therefore, for any Clifford unitary $C$,
\begin{equation}\label{Eq:X-witness}
    \mathcal{R}(\sigma) \geq \mathcal{R}[\Lambda_{\ms{X}}(C\sigma C^\dagger)],
\end{equation}
and the right-hand side, given in closed form by Eq.~\eqref{Eq:main-result-RoM}, is a certified lower bound for every $C$. Whenever it exceeds one, $\sigma$ is magical. Importantly, the bound should be read as an efficient certificate and not as a quantifier of the total amount of magic. The choice of Clifford operation can nevertheless strengthen the bound. Eq.~\eqref{Eq:X-witness} holds for every $C$, so optimizing over Clifford operations changes only the value of the lower bound. Since $\Lambda_{\ms{X}}$ keeps only the diagonal and antidiagonal elements in the computational basis, different Clifford operations select different components of $\sigma$. We call the bound obtained using a specified $C$ the twirl bound, and the bound obtained after optimization the optimal-twirl bound. 

As an illustration, we consider the ground state of the transverse-field Ising model with periodic boundary conditions \mbox{$H(g)=\sum_{i=1}^N Z_i Z_{i+1} - g\sum_{i=1}^NX_i$}, with $Z_{N+1} \equiv Z_1$. For $N=4$, Figure~\ref{F:X-twirl-Ising} compares the RoM computed directly, the reduced-RoM of Ref.~\cite{varela2026predicting}, constructed from the eight Pauli observables appearing in the Hamiltonian, and our optimized twirl. The solid curves show these three results. For $N=4$, the optimization can be performed exhaustively, so the optimal-twirl bound curve is the global optimum over Clifford operations at every value of $g$. Beyond $N=5$, both direct calculations of the RoM and exhaustive optimizations over Clifford operations become computationally impractical. To show the usefulness of our bound in certifying the presence of magic, the dashed curves show certified twirl bounds for $N=10$ and $15$, using one fixed Clifford operation for each system size, selected via a randomized search (see \hyperref[Sec:Endmatter]{End Matter} for details). Finally, the peak near $g=1$ (dashed vertical orange line) is a finite-size signature of the quantum phase transition~\cite{PFEUTY197079}, arising from competition between the Ising interaction and the transverse field. The twirl bounds reproduce this peak and certify nonstabilizerness near the critical point even when the full RoM cannot be calculated. This is consistent with previous results for RoM in reduced states and stabilizer Rényi entropy in many-body ground states near Ising criticality~\cite{Sarkar2020,Oliviero2022}.
\begin{figure}[t]
    \centering
    \includegraphics{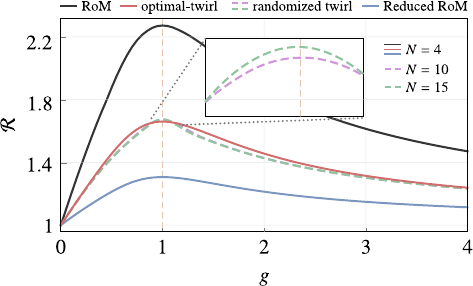}
    \caption{\emph{Witnessing magic in the transverse-field Ising model.} Ground state RoM for the periodic transverse-field Ising model as a function of $g$. Shown are the RoM (black), the optimal-twirl bound (red), and the reduced-RoM lower bound of Ref.~\cite{varela2026predicting} (blue). Both bounds certify magic for $g>0$, with the $\ms{X}$-twirl being the stronger certificate. For $N=10$ and $N=15$, the dashed curves use a fixed Clifford operation selected by randomized search for each size. Note that the ground state here is not an $\ms{X}$-state, and the twirl is what brings it into the solvable family.}
    \label{F:X-twirl-Ising}
\end{figure}

\paragraph*{Which Hamiltonians are nearest-neighbor parity symmetric?--} So far we have imposed the symmetry directly on the state. We now ask which many-body Hamiltonians produce it. The Gibbs state $\rho_\beta \propto e^{-\beta H}$ is an $\ms{X}$-state at all inverse temperatures $\beta\geq0$ if and only if $[H,G_i]=0$ for all $i$. To determine which interactions satisfy this condition, we expand $H$ in the Pauli-string basis. Since the Pauli strings are linearly independent, $H$ commutes with every $G_i$ if and only if every Pauli string appearing in the expansion does so. Consider a string $P=P_1\otimes \cdots \otimes P_N$ and define $t_i=0$ when $P_i \in \{\iden, Z\}$ and $t_i = 1$ when $P_i\in \{X,Y\}$. Because $X$ and $Y$ anticommute with $Z$, whereas $\iden$ and $Z$ commute with it, we have $G_i P = (-1)^{t_i+t_{i+1}}P G_i$. Thus, $P$ commutes with every $G_i$ when all $t_i$ coincide (see Appendix~\ref{S:SM-4} for details). Therefore, every allowed string is either diagonal, containing only $\iden$ and $Z$, or fully transverse, containing an $X$ or a $Y$ on every site. It follows that if $H$ is a $k$-local Hamiltonian with $k<N$ and $[H,G_i]=0$ for all $i$, then $H$ is diagonal in the computational basis. Consequently, its Gibbs state is a convex mixture of computational basis-states with $\mathcal{R}(\rho_\beta)=1$ at all temperatures. 

Magic in a nearest-neighbor parity-symmetric Gibbs state can arise only if $H$ contains a global transverse coupling, defined as an $N$-body Pauli string whose factor on every site is either $X$ or $Y$. With such a coupling, checking for magic becomes a single-qubit problem in each sector. In $\mathcal{B}_x$, the Hamiltonian takes the form $H_x=\epsilon_x\Pi_x - \v h_x\cdot \v \sigma^{(x)}$. As derived in Appendix~\ref{S:SM-4}, for $h_x>0$, the Bloch vector of the normalized sector Gibbs state points along $\hat{\v{h}}_{x}$ and has length $\tanh(\beta h_x)$, where $h_x=\|\v h_x\|_2$. For $h_x=0$, the normalized sector state is maximally mixed. Cooling therefore drives the Bloch vector outward in a fixed direction. The sector becomes magical when this vector crosses the boundary of the stabilizer octahedron, giving the critical inverse temperature
\begin{equation}
\beta_{\text{crt}}^{(x)} = 
\begin{cases} 
\dfrac{1}{h_x}\operatorname{atanh}\qty(\frac{h_x}{\|\v{h}_x\|_1}), &  \|\v h_x\|_1> h_x, \\[4pt]
+\infty, &  \|\v h_x\|_1= h_x.
\end{cases}
\end{equation}
The sector becomes magical once $\beta$ exceeds $\beta_{\text{crt}}^{(x)}$, which is finite if and only if $\|\v{h}_x\|_1 > \|\v{h}_x\|_2$, or equivalently, if at least two components of the logical field are nonzero. A single nonzero logical-field component drives the encoded qubit along an axis of the octahedron and can never take it outside, no matter how far the system is cooled. Magic requires competing logical-field components. For $h_x>0$, the ratio  $\|\v{h}_x\|_1/\|\v{h}_x\|_2\in [1,\sqrt{3}]$ determines the dimensionless threshold $\beta_{\text{crt}}^{(x)} h_x$ and the zero-temperature RoM within that sector, while $h_x$ sets the temperature scale. At every finite inverse temperature, all sectors have positive population. The Gibbs state becomes magical for $\beta>\beta_{\text{crt}}:=\min_x \beta_{\text{crt}}^{(x)}$. If every sector threshold is infinite, the Gibbs state is a stabilizer mixture at all temperatures.

Figure~\ref{F:spin-systems} shows this behavior for three benchmark Hamiltonians. The models test different aspects of the mechanics and are motivated by experimental methods for engineering high-order spin interactions:
\begin{align}
    H_1&=-J\sum_{i=1}^9Z_i Z_{i+1}-gX_1 X_2 \cdots X_{10} -gZ_1 ,\nonumber\\
    H_2 &=J\sum_{i=1}^{3}Z_iZ_{i+1}-g\qty(X_1X_2X_3X_4+Y_1X_2X_3X_4+Z_1),\\
    H_3 &=-J\sum_{i=1}^{4}Z_iZ_{i+1}-h\sum_{i=1}^{5}Z_i-g\qty(e^{i\phi}\prod_{i=1}^{5}\sigma_i^++e^{-i\phi}\prod_{i=1}^{5}\sigma_i^-), \nonumber
\end{align}
where $\sigma_i^\pm=\tfrac12(X_i\pm iY_i)$. The first model, $H_1$, is a ten-qubit repetition-code used to test scalability. The operators $Z_i Z_{i+1}$ are the parity checks of the code, while $\prod X_i$ and $Z_1$ act as logical $X$ and $Z$ in every sector. All $2^9$ sectors consequently experience the same two-component logical field, showing why increasing the Hilbert space dimension does not increase the RoM. Programmable $N$-body interactions in trapped-ion systems offer a way to implement its global coupling~\cite{Katz2023programmable}. The Hamiltonian $H_2$ adds a third logical-field component of equal strength and reaches the maximal RoM value of $\sqrt{3}$ in the zero-temperature limit. Four-body interactions of this type have been demonstrated in trapped-ion processors~\cite{Katz2023}. The third model $H_3$, behaves differently: its exchange term generates coherence only within the GHZ sector. Its contribution to the global RoM therefore depends on its thermal population. This model is motivated by a five-body exchange process implemented in a superconducting circuit~\cite{Zhang2022}.

\begin{figure}[t]
    \centering
    \includegraphics{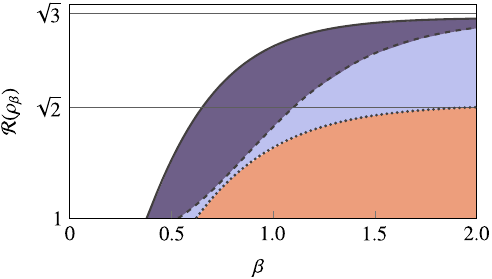}
    \caption{\emph{Magic at thermal equilibrium}. Robustness of magic of the Gibbs state as a function of the inverse temperature for three spin models. The dotted, solid and dashed curves correspond to $H_1$, $H_2$ and $H_3$. We set $J\!=\!g\!=\!1$, while for $H_3$ we choose $\phi=\tfrac\pi4$ and $h\!=\!\tfrac{g}{5\sqrt{2}}$. Each state has $\mathcal{R}(\rho_\beta)=1$ until its effective Bloch vector in at least one sector crosses the boundary of the stabilizer octahedron, at $\beta_{\mathrm{crt}}g\simeq0.380$, $0.538$, and $0.623$ for $H_2$, $H_3$, and $H_1$, respectively. As $\beta\rightarrow\infty$, the RoM of the Gibbs state of $H_1$ approaches $\sqrt{2}$, whereas those of $H_2$ and $H_3$ approach the universal bound $\sqrt{3}$.}
    \label{F:spin-systems}
\end{figure}

\paragraph*{Do all symmetries suppress magic?--} Symmetries have been shown to reduce the computational complexity of characterizing nonstabilizerness~\cite{Passarelli2024,Heinrich2019robustnessofmagic,varela2026predicting}, but they do not always suppress the resource itself. For example, permutation symmetry reduces the number of distinct Pauli expectation values to $\mathcal{O}(N^3)$ operations~\cite{Passarelli2024}, yet $\ket{T}^{\otimes N}$ still has a RoM that grows without bound with $N$, where $\ket{T}=2^{-\nicefrac12}(\ket{0}+e^{i\nicefrac\pi4}\ket{1})$. This naturally raises the question, \textit{how does symmetry suppress magic in $\ms X$-states}? As we have noted, $\ms X$-states are supported on $\bigoplus_{x \in \mathcal{X}_N} \mathcal{B}_x$. Using stabilizer operations such as controlled-not operations conditioned on a single qubit, any $\ms X$-state can be transformed into the form $\tilde{\rho}_{\ms X} = \sum_{s \in \{0,1\}^{N-1}} \ketbra{s}{s} \otimes \sigma_s$, where $\sigma_s$ are subnormalized single-qubit states, representing a classical mixture of states on a single \textit{logical} qubit (see~\hyperref[Sec:Endmatter]{End Matter}). This construction explains why the RoM of $\ms X$-states cannot exceed $\sqrt{3}$ and elucidates the two key ingredients required for suppression here. (i) A symmetry that block-diagonalizes the state, and (ii) a Clifford operation that decodes each block into a computational-basis label and a single qubit. Together, these two conditions limit the magic of the state to the maximum allowed in a single block. Interestingly, genuine multipartite entanglement remains possible in $\ms X$-states and is not similarly suppressed~\cite{huber_12}, as an effective qubit subspace such as $\operatorname{span}\Bigl\{\ket{0^N},\ket{1^N}\Bigl\}$ can still host a maximally entangled state $\ket{\text{GHZ}_N}$. Note that tracing out a single qubit from any $\rho_{\ms X}$ destroys all coherence, eliminating both the genuine multipartite entanglement and magic. 

\paragraph*{Discussion--} The nearest neighbor parity symmetry also has a coding interpretation. The subspace symmetries $G_i$ are the stabilizers of the $N$-qubit repetition code, the simplest stabilizer code, and define its stabilizer Hamiltonian $H_{\text{Rep.}} = - \sum^{N-1}_{i=1} G_i$~\cite{Kitaev2003,stab_ham}. The Hamiltonians studied here are perturbations of this stabilizer Hamiltonian that preserve its symmetry and can generate magic below a critical temperature. The bound on magic then follows from the fact that the repetition code contains only one logical qubit. This suggests asking whether perturbing the stabilizer Hamiltonians of other codes gives physically interesting models whose magic is also limited by symmetry. Exact formulas for the RoM are likely to be harder to find when the symmetry is associated with a stabilizer code with a larger logical subspace, since explicit expressions for the RoM of arbitrary multi-qubit states remain unknown. 

We asked whether a physically motivated symmetry can make many-body nonstabilizerness analytically solvable. Parity symmetry among neighboring spins does so. The same symmetry also limits the magic to the single-qubit maximum of $\sqrt{3}$ for the RoM~\cite{Howard2017}. Both results follow because parity symmetry makes the state block diagonal, with each block describing an \emph{effective} qubit whose nonstabilizerness is bounded. For many-body systems with this symmetry, the magic-dependent sampling overhead in quasiprobability simulations~\cite{Pashayan2015,delfosse_15} is therefore bounded by a constant independent of system size. Efficient simulation also requires efficient sampling of the stabilizer decomposition and efficient simulation of the remaining operations and measurements. Complementing this result, we turned to the dynamics and derived the conditions under which equilibrium states have this symmetry. We show that magic in these Gibbs states requires global transverse coupling acting on every site, and derive the corresponding critical temperature. This complements results on magic generation under thermal operations~\cite{deOliveiratrading} and its temperature dependence in specific models~\cite{sabharwal2026stuffmagiconcompact,zavatti2026quantummagicstronglycorrelated}, adding an exact critical temperature for our symmetry class to general high-temperature bounds~\cite{putterman2026quantumthermalstateslook}.

\paragraph*{Acknowledgments--} The authors thank Darlan A. Moreira for the data on the transverse-field Ising model ($N=4$), and the High-Performance Computing Center at UFRN for the computational resources used to generate Fig.~\ref{F:X-twirl-Ising}. Financial support was provided by the Simons Foundation (Grant No. 1023171, R.C.), the Brazilian National Council for Scientific and Technological Development (CNPq; Grants No. 403181/2024-0 and 301687/2025-0), the EU Horizon Europe programme (QSNP, Grant No. 101114043), the National Institute of Science and Technology for Applied Quantum Computing through CNPq Process No. 408884/2024-0, and the Financiadora de Estudos e Projetos (Grant No. 1699/24 IIF-FINEP). R.C. thanks the Technical University of Denmark for its hospitality, where part of this work was carried out during a guest professorship supported by the Otto M\o nsted Foundation. J.X. acknowledges support from the European Research Council (ERC Project ’Cocoquest’ 101043705). 

\textit{Tool Disclosure \& Code Availability.} The authors used Mathematica 14.3 and ChatGPT 5.6 Sol for symbolic analysis. Owing to the first author's stubbornness, AI was not used to help prove the main result. The code used to generate the plots is available at~\href{https://github.com/AdeOliveiraJunior/When-Symmetry-Suppresses-Magic}{GitHub}.

\bibliography{2-citations}
\section*{End Matter}\label{Sec:Endmatter}
\subsection{Decoding \ms X-states}

A general $\ms X$-state can be expressed as 
\begin{align*}
    \rho_{\ms X} = \sum_{x \in \mathcal{X}_N} r_x \ketbra{x}{x} + r_{\bx} \ketbra{\bx}{\bx} + \alpha_x\ketbra{x}{\bx} + \alpha^*_x \ketbra{\bx}{x}, 
\end{align*}
or alternatively, writing $\alpha_s :=\langle0\, s|\rho_{\ms{X}}|\bar s \, 1\rangle$,
\begin{align*}
    \rho_{\ms X} \!=\! \!\sum_{s \in \{0,1\}^{N-1}}\! r_{s0} \ketbra{s \,0}{s\,0} \!+\! r_{\bar s1} \ketbra{\bar s \, 1}{\bar s \, 1} \!+\! \alpha_s\ketbra{s \, 0}{\bar s \, 1} \!+\! \alpha^*_s \ketbra{\bar s \, 1}{s \, 0}.
\end{align*}
If, conditioned on the last qubit, we apply an $X$ to each of the first $N-1$ qubits, the resulting Clifford operation is $C_N X = \prod ^{N-1}_{i =1} C_{N}X_i$, where $C_{N}X_i$ is the controlled-not with control $N$ and target $i$. We obtain
\begin{align*}
    \!C_N X \rho_{\ms X} (C_N X)^\dagger \!=\! \sum_{s \in \{0,1\}^{N-1}} \ketbra{s}{s} \otimes \sigma_s \text{ where } \sigma_s = \begin{pmatrix} r_{s0} && \alpha_s \\ \alpha^*_s && r_{\bar s1},
    \end{pmatrix}
\end{align*}
Here, $\sigma_s$ is subnormalized. Only the $\bar{s}$ terms are flipped and we see that the contents of $\rho_{\ms X}$ are now encoded in the last qubit as a classical mixture. Note that each controlled-not is a stabilizer unitary. One can think of this as the outcome of a repetition code where $\ket{0},\ket{1}$ of the last qubit are mapped to $\ket{0^N},\ket{1^N}$ and the string $s$ then tracks the error syndrome. Tracing over the first $N-1$ qubits we obtain 
\begin{gather*}
    \sigma' \!=\! \tr_{1\ldots N-1} \left\{\sum_{s \in \{0,1\}^{N-1}} \ketbra{s}{s} \otimes \sigma_s\right\} \!=\! \sum_{s \in \{0,1\}^{\times N - 1}} \sigma_s \!=\!  \begin{pmatrix} P && A \\ A^* && 1-P
    \end{pmatrix},
\end{gather*}
where $\sum_s r_{s\,0}= P, \sum_s \alpha_s= A$. Since all $\ket{s}$ are computational-basis states and hence stabilizer states, Clifford invariance and convexity gives $\mathcal{R}(\rho_{\ms{X}})\leq \sum_{s:\tr \sigma_s>0} (\tr \sigma_s) \mathcal{R}(\sigma_s/\tr\sigma_s)$ and the RoM is upper bounded by $\sqrt{3}$. Tracing out the sector label need not to preserve the RoM, so $\sigma'$ need not contain all the original magic.

\subsection{Numerical evaluation of the \ms{X}-twirl bound}
For the ferromagnetic transverse-field Ising Hamiltonian in Fig.~\ref{F:X-twirl-Ising}, 
for $N=4$, we optimize exhaustively by evaluating all $11\,475$ inequivalent Clifford twirls at each sampled $g$ and keeping the largest bound. Since Clifford operations that define the same twirl need not be checked separately, this gives the global optimum over Clifford operations. For $N=10$ and $N=15$, we instead evaluate the twirl bound using one fixed Clifford operation for each system size. The operations are selected via numerical searches at $g=1$ and then kept fixed for all values of $g$ shown. The selected circuits contain $58$ and $70$ elementary $H$, $S$ and CNOT gates, respectively. After applying each circuit to the ground state, we evaluate the bound using the closed-form sector expression for the RoM, without reconstructing the many-body density matrix. 

\subsection{Spin models}
For the three open chains in Fig.~\ref{F:spin-systems}, we compute the RoM from the exact sector expressions, including all sectors in the Gibbs state.
\begin{enumerate}
    \item For $N=10$, the logical field is $\v h_x=(g,0,g)$ in every sector giving $\mathcal{R}(\rho_\beta)=\max\biggl\{1,\sqrt{2}\tanh(\sqrt2|g|\beta)\biggl\}$.
    \item For $N=4$, the additional $Y_1 X_2 X_3 X_4$ term gives $\v h_x=(g,g,g)$ and $\mathcal{R}(\rho_\beta) = \max\biggl\{1,\sqrt3\tanh(\sqrt3|g|\beta)\biggl\} $
\end{enumerate}
In both models, the Ising coupling changes only the sector energies. Since all sectors have the same normalized logical state, the total RoM is independent of $J$.

\begin{enumerate}[start=3]
    \item For $N=5$, the exchange term couples only $\ket{0^5}$ and $\ket{1^5}$. In this ordered basis, the GHZ sector has energy offset $-4J$ and logical field $\v h_{\mathrm{GHZ}}=(g\cos\phi,-g\sin\phi, 5h)$. Writing $\Omega= \|\v h_{\mathrm{GHZ}}\|_2=\sqrt{g^2+25h^2}$, its population is $p_{\mathrm{GHZ}}(\beta) =\frac{2e^{4\beta J}\cosh(\beta\Omega)}{Z_{\beta}}$ and the RoM for $\Omega>0$ is 
\begin{equation*}
    \mathcal R(\rho_{\beta}) =1+p_{\mathrm{GHZ}}(\beta) \left[\frac{\|\v h_{\mathrm{GHZ}}\|_1}{\Omega} \tanh(\beta\Omega)-1\right]_+
\end{equation*}
\end{enumerate}
For $\Omega=0$, $\mathcal{R}(\rho_\beta)=1$. The remaining sectors are diagonal and have RoM one. 
\clearpage
\appendix
\onecolumngrid
\section{Encoded Pauli operators and encoded stabilizer states}\label{S:SM-1}

Recall that each sector $\mathcal{B}_x$ is two-dimensional and is identified with an effective qubit by defining the logical basis $\ket{0_L}:=\ket{x}$ and $\ket{1_L}=\ket{\bx}$. The corresponding effective Pauli operators are given by Eq.~\eqref{Eq:logical-paulis}, which act only within $\mathcal{B}_x$ and $\sigma_j^{(x)}=\Pi_x \sigma_j^{(x)}\Pi_x$ for $j=X,Y, Z$. These effective operators are obtained by restricting suitable physical Pauli strings to the sector:
\begin{equation}
    \begin{split}
    \sigma_X^{(x)}&=\Pi_x X^{\otimes N}\Pi_x,\\ 
    \sigma_Y^{(x)}&=(-1)^{x_1}\Pi_x Y_1 X_2 \cdots X_N\Pi_x,\\
    \sigma_Z^{(x)}&=(-1)^{x_1}\Pi_x Z_1 \Pi_x.
\end{split}
\end{equation}

The six encoded stabilizer states of sector $x$ are
\begin{equation}\label{Eq:six-octahedron-states}
\begin{matrix}
    \ket{x}, &  & & \ket{\bx}, &  & & \ket{e_X^{\pm}} = \frac{1}{\sqrt{2}}(\ket{x}\pm\ket{\bx}), &  & & \ket{e^{\pm}_Y} =\frac{1}{\sqrt{2}}(\ket{x}\pm i\ket{\bx}).
\end{matrix}
\end{equation}
Their projectors satisfy $\ketbra{e_X^{\pm}}=\tfrac12(\Pi_x\pm \sigma_X^{(x)})$ and  $\ketbra{e_Y^{\pm}}=\tfrac12(\Pi_x\pm \sigma_Y^{(x)})$. For the state $\ms{X}$-state in Eq.~\eqref{Eq:X-states}, we write $\alpha_R^{(x)}:=\operatorname{Re}(\alpha_x)$ and $\alpha_I^{(x)}:=\operatorname{Im}(\alpha_x)$. The sector coordinates are
\begin{equation}\label{Eq:encoded-pauli-components}
\begin{matrix}
\tr(\rho \sigma^{(x)}_X)= 2\alpha_R^{(x)}, &  & & \tr(\rho \sigma^{(x)}_Y)= -2\alpha_I^{(x)},  &  &  &\tr(\rho \sigma^{(x)}_Z)= r_x-r_{\bx}, &  &  & \tr(\rho \Pi_x)= r_x+r_{\bx}.  \\
\end{matrix}
\end{equation}
For every $x\in \mathcal{X}_N$, it is convenient to introduce the following quantities
\begin{equation}\label{Eq:effective-paulis}
\begin{matrix}
s_x:=|\alpha_R^{(x)}|+|\alpha_I^{(x)}|, &  & & p_x:=r_x+r_{\bx},  &  &  & d_x:=r_x-r_{\bx}, &  &  & m_x:=\min(r_x,r_{\bx}).   
\end{matrix}
\end{equation}
Thus $p_x$ is the sector population, $d_x$ the population bias, and $s_x$ the $\ell_1$ size of the coherence. We also set
\begin{equation}
    L_x:=2s_x+|d_x|,
\end{equation}
the $\ell_1$ length of the (subnormalized) encoded Bloch vector $\v v^{(x)} = (2\alpha_R^{(x)},-2\alpha_I^{(x)},d_x)$. We call $L_x\leq p_x$ the \emph{octahedron condition} for sector $x$: it states that the normalized encoded Bloch vector, when $p_x>0$, is in the one-qubit stabilizer octahedron. Since $p_x-|d_x|=2m_x$, this condition is equivalent to $s_x\leq m_x$. When $p_x=0$, positivity forces the entire sector block to vanish, and both conditions hold automatically.

Each sector $\mathcal{B}_x$ carries an encoded qubit and consequently contains an entire Bloch sphere of pure encoded states. However, it is not yet clear which encoded pure states are also $N$-qubit stabilizer states. In particular, are the six encoded eigenstates of the three logical Pauli operators the only pure stabilizer states supported on $\mathcal{B}_x$? The following Lemma answers this question affirmatively.

\begin{lem}[stabilizer states within an antipodal sector\label{Lem:stabilizer-states-antipodal}] The pure $N$-qubit stabilizer states supported on $\mathcal{B}_x$ are the six encoded octahedron states in Eq.~\eqref{Eq:six-octahedron-states}. Consequently, the encoded Bloch vector of any pure stabilizer state supported on $\mathcal{B}_x$ is one of $\pm \v e_X, \pm \v e_Y$, or $\pm \v e_Z$.
\end{lem}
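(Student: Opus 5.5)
The proof has two directions. First, each of the six states in Eq.~\eqref{Eq:six-octahedron-states} is an $N$-qubit stabilizer state. Second, no other pure state supported on $\mathcal{B}_x$ is one.

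For the first direction, $\ket{x}$ and $\ket{\bx}$ are computational-basis states, so they are stabilized by $\{(-1)^{x_i}Z_i\}$ and $\{(-1)^{\bx_i}Z_i\}$ respectively. For the superpositions, observe that $\ket{e_X^{\pm}}$ is the joint eigenstate of the $N$ commuting, independent Pauli strings $\varepsilon^x_i G_i$ ($i=1,\dots,N-1$) and $\pm X^{\otimes N}$. Similarly, $\ket{e_Y^{\pm}}$ is the joint eigenstate of the $\varepsilon^x_i G_i$ and $\pm(-1)^{x_1} Y_1X_2\cdots X_N$. To verify this I would use the restriction identities for $\sigma_X^{(x)}$ and $\sigma_Y^{(x)}$ stated above, together with the fact that $X^{\otimes N}$ and $Y_1X_2\cdots X_N$ commute with every $G_i$, since they are fully transverse. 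Because the group generated by these strings has $2^N$ elements and does not contain $-\iden$, its stabilized subspace is one-dimensional. That subspace therefore consists exactly of the named state, which is consequently a stabilizer state. (Alternatively, each state is obtained from $\ket{x}$ by a Hadamard or $HS^\dagger$-type gate on qubit 1 followed by CNOTs from qubit 1 to the others, followed by $X$ corrections.)

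For the converse, let $\ket\psi=a\ket{x}+b\ket{\bx}$ be a pure stabilizer state, with stabilizer group $S$ of order $2^N$. I will use the standard fact that a stabilizer state is an equal-magnitude superposition over an affine subspace of $\mathbb{F}_2^N$, with phases that are powers of $i$; this is the Dehaene--De Moor characterization. The computational-basis support of $\ket\psi$ has size $1$ or $2$. If the size is $1$, then $\ket\psi$ is $\ket{x}$ or $\ket{\bx}$. If the size is $2$, then $|a|=|b|=1/\sqrt2$ and $b/a\in\{\pm1,\pm i\}$, which gives exactly $\ket{e_X^\pm}$ and $\ket{e_Y^\pm}$. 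An alternative argument avoids the characterization. Consider $\tr(\psi P)$ for $P=X^{\otimes N}$ and $P=Y_1X_2\cdots X_N$: for a stabilizer state every Pauli expectation lies in $\{0,\pm1\}$. The encoded Bloch vector has components equal to these expectations (up to sign) together with $\tr(\psi Z_1)$. It is a unit vector whose entries lie in $\{0,\pm1\}$, so it must be one of $\pm\v e_X,\pm\v e_Y,\pm\v e_Z$. This second route is cleaner, and I would present it as the main argument.

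The main step to check carefully is that the restricted expectations coincide with full Pauli expectations. Since $\psi=\Pi_x\psi\Pi_x$, we have $\tr(\psi\,\sigma^{(x)}_j)=\tr(\psi\,\Pi_x P_j\Pi_x)=\tr(\psi P_j)$ up to the sign $(-1)^{x_1}$, where $P_j$ is the physical string listed above. The fact that every Pauli expectation of a pure stabilizer state lies in $\{0,\pm1\}$ follows because $\pm P$ either belongs to $S$ or anticommutes with some element of $S$. Combining this with purity, $\|\v v\|_2=1$, forces exactly one nonzero component. This yields the final claim about the encoded Bloch vectors, and the six states follow.
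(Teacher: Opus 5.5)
Your proposal is correct and follows essentially the paper's proof: the converse uses the same observation that the three encoded Bloch components are, up to sign, expectations of the Pauli strings $Z_1$, $X^{\otimes N}$ and $Y_1X_2\cdots X_N$. These expectations therefore lie in $\{0,\pm1\}$, and a unit vector with such entries must be one of the six octahedron vertices. The differences are cosmetic: the paper first conjugates by $X^x$ into the GHZ sector, treats $ab=0$ separately, and certifies the six states as single-qubit Clifford images of $\ket{0^N}$, $\ket{1^N}$, $\ket{\text{GHZ}_N}$ rather than by writing down explicit stabilizer generators.
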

\begin{proof}
We prove the statements in two steps. First, we verify that each of the six states in Eq.~\eqref{Eq:six-octahedron-states} is a pure $N$-qubit stabilizer state. We then show that no other pure stabilizer state can have its support contained in $\mathcal{B}_x$.

For the first step consider the pure stabilizer states, $\ket{0^N} = \ket{0\cdots 0}$, $\ket{1^N} =\ket{1\cdots 1}$ and $\ket{\text{GHZ}_N} = 2^{-\nicefrac12}(\ket{0\cdots 0}+\ket{1\cdots 1})$. Then it is easy to show that we may convert these states to $\ket{x},\ket{\bar{x}}, \ket{e^{\pm}_X}, \ket{e^{\pm}_Y}$ for any $x$ using only single qubit Clifford unitaries showing that these states are pure stabilizer states also. Now define the Pauli operator $X^x :=\bigotimes_{i=1}^N X_i^{x_i}$ which applies an $X$ at each position where $x_i =1$. It acts without introducing phases and satisfies $X^{x} \ket{0^N}= \ket{x}$ and $X^{x}\ket{1^N} = \ket{\bx}$. Therefore, it maps $\ket{0^N}$ and $\ket{1^N}$ to $\ket{x}$ and $\ket{\bar{x}}$, respectively. For the four remaining states consider applying the single-qubit Clifford gates $Z_1$ and $S_1=\operatorname{diag}(1,i)$, and their products to $\ket{\text{GHZ}_N}$. In particular $X^{x}\ket{\text{GHZ}_N} = \ket{e_X^+}$, $Z_1X^{x}\ket{\text{GHZ}_N} = \ket{e_X^-}$ and $S_1X^{x}\ket{\text{GHZ}_N} = \ket{e^+_Y}$, $Z_1S_1X^{x}\ket{\text{GHZ}_N} = \ket{e^-_Y}$ giving the six states in Eq.~\eqref{Eq:six-octahedron-states}. Since Clifford unitaries preserve stabilizer states, all six states in Eq.~\eqref{Eq:six-octahedron-states} are pure stabilizer states.

For the second step, let $\ket{\psi} = a\ket{x}+b\ket{\bx}$ be a pure stabilizer state supported on $\mathcal{B}_x$. If $ab =0$, then up to an irrelevant global phase, $\ket{\psi}$ is either $\ket{x}$ or $\ket{\bx}$. We then assume that $ab\neq 0$. Since we have just shown it is possible to map between sectors via Clifford unitaries it will be convenient to map the sector $\mathcal{B}_x$ to the canonical GHZ-sector, $\text{span}\{\ket{0^N},\ket{1^N}\}$ for this argument. Applying the Clifford operator $X^x$ introduced above gives $\ket{\varphi}:=X^x\ket{\psi}=a\ket{0\cdots0}+b\ket{1\cdots 1}$. Since Clifford unitaries preserve stabilizer states, $\ket{\varphi}$ is again a pure stabilizer state. Recall that if $\tau=\ketbra{\varphi}$ is a pure stabilizer state then denoting the stabilizer group $S$ we have $\tau = 2^{-N} \sum_{g\in S} g$. Consequently, for every Hermitian Pauli string $P$,
\begin{equation}\label{Eq:stabilizer-property}
    \langle\varphi|P|\varphi\rangle \in \{0,+1,-1\}.
\end{equation}
This follows from Hilbert-Schmidt orthogonality of Pauli strings, which says that the expectation vanishes unless $P=\pm g$ for some $g\in S$, in which case it equals $\pm 1$. We now apply this constraint to three Pauli strings that probe the three encoded Bloch components, namely $Z_1$, $X^{\otimes N}$, and $Y_1 X_2 \cdots X_N$. On the two-dimensional subspace $\operatorname{span}\{\ket{0\cdots0}, \ket{1 \cdots 1}\}$, these operators play the roles of the three Pauli operators of an encoded qubit. Their expectation values on $\ket{\varphi}$ are
\begin{align}\label{Eq:three-calculations}
\begin{split}
   \langle \varphi|Z_1|\varphi\rangle &= |a|^2-|b|^2 \\
     \langle \varphi|X^{\otimes N}|\varphi\rangle & = 2\operatorname{Re}(a^*b), \\
     \langle \varphi|Y_1X_2\cdots X_N|\varphi\rangle & = 2\operatorname{Im}(a^* b). \\
\end{split}
\end{align}
By Eq.~\eqref{Eq:stabilizer-property}, each of the three quantities in Eq.~\eqref{Eq:three-calculations} belongs to $\{0,\pm1\}$. On the other hand, normalization gives
\begin{equation}
    [2\operatorname{Re}(a^*b)]^2+[2\operatorname{Im}(a^*b)]^2+(|a|^2-|b|^2)^2 = (|a|^2+|b|^2)^2 =1.
\end{equation}
Thus, the three expectation values form a unit vector whose components all lie in $\{0,\pm 1\}$. It follows that exactly one component equals $\pm 1$, while the other two vanish. The possibility $|a|^2-|b|^2=\pm1$ would force either $a=0$ or $b=0$, contradicting $ab\neq0$. Hence $|a|^2-|b|^2=0$, so that $|a|=|b|=2^{-\nicefrac12}$. Moreover, one of $2\operatorname{Re}(a^*b)$ and $2\operatorname{Im}(a^*b)$ equals $\pm 1$, while the other vanishes. As a result, $\tfrac{b}{a} =\tfrac{a^*b}{|a|^2} = 2a^*b\in \{\pm 1, \pm i\}$. Hence $\ket{\varphi}$ is one of the four GHZ phase variants, and conjugating back by $X^x$ shows that $\ket{\psi}$ is one of the four coherent states in Eq.~\eqref{Eq:six-octahedron-states}.

Together with the two computational-basis cases, this proves that the only pure stabilizer states supported on $\mathcal{B}_x$ are the six states in Eq.~\eqref{Eq:six-octahedron-states}. Finally, the three Pauli expectations above, conjugated back by $X^x$ are exactly the three encoded Bloch components $\tr(\ketbra{\psi}\sigma_j^{(x)})$ for $j= X,Y,Z$. Their possible values are therefore precisely $\pm \v e_X$, $\pm \v e_Y$ and $\pm \v e_Z$, the six vertices of the encoded octahedron.
\end{proof}

The previous Lemma classifies pure stabilizer states that are supported entirely within a single sector. However, a general $N$-qubit stabilizer state may have support across several sectors. To analyse an $\ms{X}$-state sector by sector, we need to understand what such a global stabilizer state looks like after projection onto one fixed sector. The following Lemma shows that the projection is either zero, or after normalization, one of the six encoded stabilizer states identified above. Thus, in each sector, an arbitrary pure stabilizer state contributes an octahedron vertex scaled by its population in that sector. This is what allows the global stabilizer constraints to be translated into independent bounds on the encoded Bloch vector of each sector.

\begin{lem}[Antipodal-sector stabilizer projection\label{Lem:sector-stabilizer-projection}] For any pure $N$-qubit stabilizer projector $\tau \in \STAB_N$, and any $x\in \mathcal{X}_N$, let $\pi^{(x)}_\tau :=\tr(\tau\Pi_x)$ denote the population of $\tau$ in the sector $\mathcal{B}_x$, and define the corresponding unnormalized encoded Bloch vector components by $v_j^{(x)}(\tau):=\tr(\tau \sigma_j^{(x)})$ for $j= X,Y,Z$. Then, the unnormalized encoded Bloch vector $\v v^{(x)}(\tau) = (v_X^{(x)}(\tau), v_Y^{(x)}(\tau), v_Z^{(x)}(\tau))$ satisfies
\begin{equation}\label{Eq:bloch-decomposition}
    |v_X^{(x)}(\tau)| +|v^{(x)}_Y(\tau)|+|v_Z^{(x)}(\tau)| = \pi_\tau^{(x)}.
\end{equation}
Whenever $\pi_\tau^{(x)}>0$, the normalized projection of $\tau$ onto $\mathcal{B}_x$ is one of the six encoded stabilizer states in Eq.~\eqref{Eq:six-octahedron-states}.
\end{lem}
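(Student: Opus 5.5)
The plan is to reduce the statement to Lemma~\ref{Lem:stabilizer-states-antipodal} by showing that the projection of a pure stabilizer state onto a sector is, up to normalization, again a pure stabilizer state supported on $\mathcal{B}_x$. Write $\tau=\ketbra{\psi}$ with stabilizer group $S$. The projector $\Pi_x=\prod_{i=1}^{N-1}\tfrac12(\iden+\varepsilon_i^x G_i)$ is a product of commuting Pauli projectors. Each factor $\tfrac12(\iden+\varepsilon_i^xG_i)$ is the post-measurement projector of a Pauli measurement. It is a standard fact of the stabilizer formalism that measuring a Pauli observable $P$ on a stabilizer state and obtaining a nonzero outcome yields a (normalized) stabilizer state. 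There are two cases: either $\pm P\in S$, so the state is unchanged or annihilated, or $P$ anticommutes with some generator and the outcome has probability $1/2$, with a new stabilizer group obtained by replacing that generator with $\pm P$. Applying this iteratively over $i=1,\dots,N-1$ shows that whenever $\pi^{(x)}_\tau=\bra{\psi}\Pi_x\ket{\psi}>0$, the vector $\ket{\phi}:=\Pi_x\ket{\psi}/\sqrt{\pi^{(x)}_\tau}$ is a pure $N$-qubit stabilizer state. This is the step I regard as the crux; it needs only the Gottesman--Knill update rule, which I would state and cite rather than re-derive.

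Since $\Pi_x\ket{\phi}=\ket{\phi}$, the state $\ket{\phi}$ is supported on $\mathcal{B}_x$. Lemma~\ref{Lem:stabilizer-states-antipodal} then says $\ket{\phi}$ is one of the six octahedron states of Eq.~\eqref{Eq:six-octahedron-states}, which proves the second claim. For the identity \eqref{Eq:bloch-decomposition}, I would use $\sigma_j^{(x)}=\Pi_x\sigma_j^{(x)}\Pi_x$ to write
\begin{equation*}
v_j^{(x)}(\tau)=\bra{\psi}\Pi_x\sigma_j^{(x)}\Pi_x\ket{\psi}=\pi^{(x)}_\tau\,\bra{\phi}\sigma_j^{(x)}\ket{\phi}.
\end{equation*}
By Lemma~\ref{Lem:stabilizer-states-antipodal}, the vector $(\bra{\phi}\sigma_j^{(x)}\ket{\phi})_j$ is one of $\pm\v e_X,\pm\v e_Y,\pm\v e_Z$, so its $\ell_1$ norm is $1$. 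Multiplying by $\pi^{(x)}_\tau$ gives \eqref{Eq:bloch-decomposition}.

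The remaining case is $\pi^{(x)}_\tau=0$. Then $\Pi_x\ket{\psi}=0$, every $v_j^{(x)}(\tau)$ vanishes, and \eqref{Eq:bloch-decomposition} reads $0=0$.

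The main obstacle is making the projection step rigorous without hand-waving. If I want to avoid citing the measurement rule, an alternative is to work directly with $\Pi_x\tau\Pi_x=2^{-N}\sum_{g\in S}\Pi_x g\Pi_x$. Each Pauli string $g$ either commutes with all $G_i$, and so passes through $\Pi_x$, or anticommutes with some $G_i$, in which case $\Pi_x g\Pi_x=0$. One then checks that the surviving terms, restricted to $\mathcal{B}_x$, combine the identity with at most one logical Pauli with coefficient of modulus equal to the population. I would fall back on this only if the iterated-measurement argument proved awkward to phrase.
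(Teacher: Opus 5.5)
Your proposal is correct and matches the paper's proof essentially step for step: you factor $\Pi_x$ into the commuting Pauli projectors $\tfrac12(\iden+\varepsilon_i^x G_i)$, apply the stabilizer measurement-update rule iteratively to show that $\Pi_x\ket{\psi}/\sqrt{\pi^{(x)}_\tau}$ is a pure stabilizer state supported on $\mathcal{B}_x$, invoke Lemma~\ref{Lem:stabilizer-states-antipodal}, and rescale the resulting unit-$\ell_1$ Bloch vector by $\pi^{(x)}_\tau$, with the case $\pi^{(x)}_\tau=0$ trivial. The one detail the paper states explicitly and you leave implicit is that every intermediate outcome has nonzero probability; this holds because the partial product $Q_k$ of the first $k$ factors satisfies $\Pi_x\le Q_k$, so $Q_k\ket{\psi}=0$ would force $\Pi_x\ket{\psi}=0$.
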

\begin{proof}
First consider the case in which the projection onto $\mathcal{B}_x$ has zero population. If $\pi_\tau^{(x)} =0$, then the positive operator $\Pi_x \tau \Pi_x$ has trace zero and consequently vanishes $\Pi_x \tau \Pi_x =0$. Since each encoded Pauli operator is supported on $\mathcal{B}_x$, we have $\sigma_j^{(x)} = \Pi_x\sigma_j^{(x)} \Pi_x$. Hence, for $j= X,Y,Z$, it follows that $v_j^{(x)}(\tau) = \tr(\tau \sigma_j^{(x)}) = \tr(\Pi_x\tau \Pi_x \sigma_j^{(x)}) =0$. Thus the claimed equality Eq.~\eqref{Eq:bloch-decomposition} is immediate. 

Suppose now that $\pi_\tau^{(x)}>0$. Recall that the projector onto $\mathcal{B}_x$ can be written as $\Pi_x = \prod_{i=1}^{N-1} \Pi_x^{(i)}$, where $\Pi_x^{(i)}:=\tfrac12(\iden + \varepsilon_i^x Z_i Z_{i+1})$. Each factor $\Pi_x^{(i)}$ is the projector onto the eigenspace of the Pauli observable $Z_i Z_{i+1}$ with eigenvalue $\varepsilon_i^x$. Since these Pauli observables commute, projecting onto $\mathcal{B}_x$ is equivalent to measuring them sequentially and conditioning on the outcomes $\varepsilon_1^x, \dots, \varepsilon_{N-1}^x$. We must verify that every step in this conditioning procedure occurs with nonzero probability. For $k=1, \dots, N-1$, define the partial projector $Q_k:=\Pi_x^{(1)}\dots \Pi_x^{(k)}$. The range of $\Pi_x$ is contained in the range of $Q_k$ because $\Pi_x$ imposes all $N-1$ syndrome conditions, whereas $Q_k$ imposes only the first $k$. Equivalently, $\Pi_x \leq Q_k$ as projectors. Consequently $\tr(Q_k \tau)\geq\tr(\Pi_x \tau) = \pi_\tau^{(x)} >0$. Thus, every intermediate sequence of outcomes has nonzero probability, so the corresponding conditional state is well defined. 

Conditioning a pure stabilizer state on a nonzero-probability outcome of a Pauli measurement produces another pure stabilizer state. Applying this fact successively shows that the normalized projection $\tilde{\tau}_x :=\tfrac{\Pi_x\tau\Pi_x}{\pi_\tau^{(x)}}$ is a pure stabilizer state supported entirely in $\mathcal{B}_x$. By Lemma~\ref{Lem:stabilizer-states-antipodal}, $\tilde{\tau}_x$ must therefore be one of the six encoded stabilizer states. Its encoded Bloch vector is consequently one of the six octahedron vertices and hence  $\sum_{j= X,Y,Z} \big|\tr(\tilde{\tau}_x \sigma_j^{(x)})\big|=1$. Finally, since each $\sigma_j^{(x)}$ is supported on $\mathcal{B}_x$, we have $v_j^{(x)}(\tau) = \tr(\tau \sigma_j^{(x)}) = \tr(\Pi_x \tau \Pi_x \sigma_j^{(x)})= \pi_\tau^{(x)}\tr(\tilde{\tau}_x\sigma_j^{(x)})$. Taking absolute values and summing over $j= X,Y,Z$ gives 
$\sum_{j= X,Y,Z} |v_j^{(x)}(\tau)| =\pi_\tau^{(x)}\sum_{j= X,Y,Z}|\tr(\tilde{\tau}_x\sigma_j^{(x)})| = \pi_\tau^{(x)}$. This proves the claimed equality when $\pi_\tau^{(x)}>0$.
\end{proof}

\section{Characterizing and quantifying nonstabilizerness}\label{S:SM-2}

We now have all the ingredients needed to determine whether an arbitrary $N$-qubit \ms{X}-state is stabilizer or magic. The following theorem gives a necessary and sufficient condition in terms of the encoded Bloch vector of each antipodal sector.

\begin{thm}[stabilizerness criterion for $N$-qubit \ms{X}-states \label{Thm:necessary-and-sufficient-condition}] Let $\rho$ be an $N$-qubit \ms{X}-state. Then, 
\begin{equation}
    \rho \in \conv(\STAB_N) \Longleftrightarrow s_x \leq m_x \quad \text{for every sector $x \in \mathcal{X_N}$}.
\end{equation}
    Thus, $\rho$ is magic if and only if at least one sector violates its octahedron condition.
\end{thm}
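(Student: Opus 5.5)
We prove the two directions separately, using Lemma~\ref{Lem:sector-stabilizer-projection} for necessity and Lemma~\ref{Lem:stabilizer-states-antipodal} for sufficiency. Throughout, the octahedron condition $s_x\le m_x$ is equivalent to $L_x\le p_x$, because $p_x-|d_x|=2m_x$.

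\emph{Necessity.} Suppose $\rho=\sum_k q_k\tau_k$ with $q_k\ge0$, $\sum_k q_k=1$ and $\tau_k\in\STAB_N$. Fix a sector $x$. By linearity of the trace, $p_x=\sum_k q_k\pi^{(x)}_{\tau_k}$ and $\v v^{(x)}(\rho)=\sum_k q_k\v v^{(x)}(\tau_k)$. The triangle inequality for the $\ell_1$ norm then gives
\begin{equation*}
L_x=\|\v v^{(x)}(\rho)\|_1\le\sum_k q_k\|\v v^{(x)}(\tau_k)\|_1=\sum_k q_k\pi^{(x)}_{\tau_k}=p_x,
\end{equation*}
where the middle equality is Eq.~\eqref{Eq:bloch-decomposition}. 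Hence $s_x\le m_x$ holds in every sector.

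\emph{Sufficiency.} Assume $L_x\le p_x$ for all $x$. Sectors with $p_x=0$ have a vanishing block by positivity and can be dropped. For each sector with $p_x>0$, we construct a convex decomposition of $\rho_x$ into the six encoded octahedron projectors, scaled by $p_x$. Write $\v v^{(x)}=(v_X,v_Y,v_Z)$ and set
\begin{equation*}
\rho_x=\sum_{j\in\{X,Y,Z\}}|v_j|\,\tfrac12\bigl(\Pi_x+\operatorname{sgn}(v_j)\sigma^{(x)}_j\bigr)+\frac{p_x-L_x}{2}\Bigl(\ketbra{x}+\ketbra{\bx}\Bigr).
\end{equation*}
Each term is a nonnegative multiple of one of the projectors $\ketbra{e^\pm_X}$, $\ketbra{e^\pm_Y}$, $\ketbra{x}$ or $\ketbra{\bx}$. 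The identities needed here are $\tfrac12(\Pi_x\pm\sigma_Z^{(x)})\in\{\ketbra{x},\ketbra{\bx}\}$ and $\ketbra{x}+\ketbra{\bx}=\Pi_x$. The Pauli parts sum to $\tfrac12\v v^{(x)}\cdot\v\sigma^{(x)}$. The $\Pi_x$ coefficients sum to $\tfrac12 L_x+\tfrac12(p_x-L_x)=\tfrac12 p_x$. All weights are nonnegative precisely because $L_x\le p_x$, and they add up to $p_x$. By Eq.~\eqref{Eq:X-states}, $\rho=\sum_x\rho_x$, so summing over sectors gives a convex combination with total weight $\sum_x p_x=1$. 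Its components are pure $N$-qubit stabilizer states by Lemma~\ref{Lem:stabilizer-states-antipodal}. Therefore $\rho\in\conv(\STAB_N)$.

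The main potential obstacle is necessity, since a global stabilizer decomposition may use states spread over many sectors, possibly with inter-sector coherences. This is exactly what Lemma~\ref{Lem:sector-stabilizer-projection} handles. Once it is in hand, only sector-diagonal quantities enter, and off-sector coherences in the $\tau_k$ never contribute to $\tr(\tau_k\Pi_x)$ or $\tr(\tau_k\sigma^{(x)}_j)$. The final statement, that $\rho$ is magic iff some sector violates $s_x\le m_x$, is the contrapositive of the equivalence.
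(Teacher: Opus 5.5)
Your proof is correct and is essentially the paper's own. Necessity is the same triangle-inequality argument via Lemma~\ref{Lem:sector-stabilizer-projection}. For sufficiency, once your $|v_Z|$ term is merged with the remainder, your decomposition assigns weights $r_x-s_x$ and $r_{\bx}-s_x$ to $\ketbra{x}$ and $\ketbra{\bx}$, and $|\alpha_R^{(x)}|\pm\alpha_R^{(x)}$, $|\alpha_I^{(x)}|\mp\alpha_I^{(x)}$ to the $e_X^\pm$, $e_Y^\pm$ projectors, which is exactly the paper's Eq.~\eqref{Eq:projector-expansion}.
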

\begin{proof}
We prove the two implications separately. For sufficiency, we explicitly decompose each sector block into the six stabilizer-states projectors identified in Lemma~\ref{Lem:stabilizer-states-antipodal}. For necessity, we use Lemma~\ref{Lem:sector-stabilizer-projection} to show that every convex combination of global stabilizer states must satisfy the octahedron condition in each sector.

Suppose that $s_x\leq m_x$ for every sector $x\in \mathcal{X}_N$. We show that $\rho$ can be written as a convex combination of pure $N$-qubit stabilizer projectors. We start by fixing a sector $\mathcal{B}_x$ and defining the nonnegative coefficients $u_{\pm}:=|\alpha_R^{(x)}|\pm\alpha_R^{(x)}$, $w_{\pm}:=|\alpha_I^{(x)}|\mp\alpha_I^{(x)}$, $c_x:=r_x-s_x$ and $c_{\bx}:=r_{\bx}- s_x$. The coefficients $u_{\pm}$ and $w_{\pm}$ are nonnegative by construction, while $c_x$ and $c_{\bx}$ are nonnegative because $s_x\leq m_x$. Using $\ketbra{e_X^{\pm}}=\tfrac12(\Pi_x\pm\sigma_X^{(x)})$ and $\ketbra{ e_Y^{\pm}}=\tfrac12(\Pi_x\pm\sigma_Y^{(x)})$, a direct computation gives
\begin{equation}\label{Eq:projector-expansion}
    \Pi_x\rho\Pi_x = c_x\ketbra{x}+c_{\bx}\ketbra{\bx}+u_+\ketbra{e_X^+}+u_-\ketbra{e_X^-}+w_+\ketbra{e_Y^+} + w_- \ketbra{e_Y^-}.
\end{equation}
The total coefficient in this decomposition is $c_x + c_{\bx}+u_++u_-+w_++w_-= p_x -2s_x+2s_x=p_x$. By Lemma~\ref{Lem:stabilizer-states-antipodal} every projector in Eq.~\eqref{Eq:projector-expansion} belongs to $\STAB_N$. Summing the decompositions over all sectors gives a nonnegative decomposition of $\rho$ into pure stabilizer projectors. Its total weight is $\sum_{x\in \mathcal{X}_N} p_x=1$. It is therefore a convex decomposition, and hence $\rho \in \conv(\STAB_N)$.

Conversely, suppose that $\rho \in \conv(\STAB_N)$. By definition, there exist coefficients $\lambda_k\geq0$ with $\sum_k\lambda_k =1$ and pure stabilizer projectors $\tau_k$, such that $\rho=\sum_k \lambda_k \tau_k$, with $\tau_k\in \STAB_N$. Fix $x\in\mathcal{X}_N$. By Eq.~\eqref{Eq:encoded-pauli-components}, the $\ell_1$ length of the encoded Bloch vector in this sector is $L_x=\sum_{j= X,Y,Z}\qty|\tr(\rho \sigma_j^{(x)})|=2|\alpha_R^{(x)}|+2|\alpha_I^{(x)}|+|d_x|$.  Now inserting the stabilizer decomposition of $\rho$ into this expression and using linearity of the trace,
\begin{align}
    L_x &=\sum_{j= X,Y,Z} \qty|\tr(\rho \sigma^{(x)}_j)| = \sum_{j= X,Y,Z}\qty|\sum_k\lambda_k\tr(\tau_k\sigma_j^{(x)})| = \sum_{j= X,Y,Z}\qty|\sum_k\lambda_kv^{(x)}_j(\tau_k)|.
\end{align}
Using the triangle inequality and the fact that $\lambda_k\geq0$, we get $L_x \leq \sum_k \lambda_k \sum_{j= X,Y,Z}|v^{(x)}_j(\tau_k)|$. Now Lemma~\ref{Lem:sector-stabilizer-projection} applies to each pure stabilizer projector $\tau_k$, so
\begin{equation}
    L_x \leq \sum_k \lambda_k \pi_{\tau_k}^{(x)} = \sum_k \lambda_k\tr(\tau_k \Pi_x) = \tr[\qty(\sum_k \lambda_k \tau_k)\Pi_x] = \tr(\rho \Pi_x) = p_x.
\end{equation}
Thus every stabilizer mixture of the form in Eq.~\eqref{Eq:X-states} satisfies $L_x \leq p_x$. Since $L_x = 2s_x+|d_x|$ and $p_x -|d_x| = 2m_x$, this is equivalent to $s_x \leq m_x$. Since the sector $x$ is arbitrary, this condition must hold in every sector. This proves Theorem~\ref{Thm:necessary-and-sufficient-condition}.
\end{proof}

Theorem~\ref{Thm:necessary-and-sufficient-condition} answers whether the state is inside or outside the stabilizer polytope. We now ask a second question. Once the state is outside, how far is it? To answer this question we use the robustness of magic. For a state $\omega$, define
\begin{equation}
    \mathcal{R}(\omega) = \min \qty{\sum_k |l_k| : \omega = \sum_k l_k \tau_k, \tau_k \in \operatorname{STAB}_N}\geq 1.
\end{equation}
This is the minimal $\ell_1$ norm of a signed stabilizer decomposition. Since $\tr(\omega)=1$, one has $\mathcal{R}(\omega)\geq 1$, with equality if and only if $\omega \in \operatorname{conv}(\operatorname{STAB}_N)$. 

Our goal is to compute $\mathcal{R}(\rho)$ exactly. To this end, we will prove matching lower and upper bounds. We first derive the lower bound by writing the RoM in its LP-dual form:
\begin{equation}\label{Eq:dual-form-norm}
    \mathcal{R}(\omega) = \max_{H=H^{\dagger}} \qty{\tr(H\omega):|\tr(H\tau)|\leq 1 \:\forall \tau \in \operatorname{STAB}_N}.
\end{equation}
Thus, any Hermitian operator $H$ whose expectation lies between $-1$ and $1$ on every pure stabilizer projector gives a certified lower bound on $\mathcal{R}(\omega)$. 

To construct the upper bound, we decompose each sector block independently into its six encoded stabilizer states. We first need the optimal signed-decomposition cost of a general subnormalized qubit over the one-qubit stabilizer octahedron. The following Lemma gives this cost in closed form.

\begin{lem}[Subnormalized one-qubit octahedron norm]\label{Lem:subnormalized} Let $T=\frac{1}{2}(p \iden + a_x X + a_y Y+ a_z Z)$, where $p\ge0$ and $\v{a} = (a_x, a_y, a_z)\in\mathbbm{R}^3$, be a Hermitian one-qubit operator and set $L:=\|\v a\|_1$. Then,
\begin{equation}
    \min \qty{\sum_\ell |x_{\ell}|: T = \sum_\ell x_\ell \omega_{\ell}, \:\omega_l \in \operatorname{STAB}_1} = \max \{p,L\}.
\end{equation}
\end{lem}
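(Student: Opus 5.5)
We establish the equality by proving matching lower and upper bounds on the optimal signed-decomposition cost over the six one-qubit stabilizer projectors $\tfrac12(\iden\pm X)$, $\tfrac12(\iden\pm Y)$, $\tfrac12(\iden\pm Z)$.

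\emph{Lower bound.} Take any decomposition $T=\sum_\ell x_\ell\omega_\ell$. Taking the trace gives $p=\sum_\ell x_\ell\le\sum_\ell|x_\ell|$. For the second bound, use the dual-witness idea of Eq.~\eqref{Eq:dual-form-norm} with $H=\operatorname{sgn}(a_x)X+\operatorname{sgn}(a_y)Y+\operatorname{sgn}(a_z)Z$, adopting any sign convention for vanishing components. Every stabilizer projector has a Bloch vector equal to one of $\pm\v e_j$, so $|\tr(H\omega_\ell)|\le1$. On the other hand, $\tr(HT)=\sum_j|a_j|=L$. Hence $L=\sum_\ell x_\ell\tr(H\omega_\ell)\le\sum_\ell|x_\ell|$. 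Together these give $\sum_\ell|x_\ell|\ge\max\{p,L\}$.

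\emph{Upper bound.} We construct a decomposition whose cost is exactly $\max\{p,L\}$. For each axis $j$, place weight $|a_j|$ on the projector $\tfrac12(\iden+\operatorname{sgn}(a_j)\sigma_j)$. The resulting operator has Bloch part $\tfrac12\v a\cdot\v\sigma$ and identity part $\tfrac L2\iden$. It remains to supply the missing identity part $\tfrac{p-L}{2}\iden$.

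If $p\ge L$, we use $\tfrac{p-L}{2}\iden=\tfrac{p-L}{2}\bigl[\tfrac12(\iden+Z)+\tfrac12(\iden-Z)\bigr]$. This adds nonnegative weights with total $p-L$, so the cost is $L+(p-L)=p$. This is the same construction used in the sufficiency part of Theorem~\ref{Thm:necessary-and-sufficient-condition}.

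If $p<L$, we must remove $\tfrac{L-p}{2}\iden$ without increasing the cost beyond $L$. Writing $\iden$ as the sum of an antipodal pair of projectors costs $2$ per unit of $\iden$ and would yield $2L-p$, which is too large. Instead, pick an axis $j$ with $a_j\neq0$ (one exists since $L>p\ge0$). Replace the weight on $\tfrac12(\iden+\operatorname{sgn}(a_j)\sigma_j)$ by a combination of both eigenprojectors of $\sigma_j$: use weights $c_+$ on $\tfrac12(\iden+\operatorname{sgn}(a_j)\sigma_j)$ and $c_-$ on $\tfrac12(\iden-\operatorname{sgn}(a_j)\sigma_j)$, chosen so that $c_+-c_-=|a_j|$ and $c_++c_-=|a_j|-(L-p)$.

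This works without sign issues only when $|a_j|\ge L-p$. In general the deficit $L-p$ has to be distributed across all three axes. On axis $j$, choose $c_+^{(j)}-c_-^{(j)}=|a_j|$ and $c_+^{(j)}+c_-^{(j)}=t_j$, with the $t_j$ satisfying $\sum_j t_j=p$ and $|t_j|\le|a_j|$. Such $t_j$ exist because $0\le p\le L$; for instance, take $t_j=p|a_j|/L$. The cost on axis $j$ is $|c_+^{(j)}|+|c_-^{(j)}|=\max\{|a_j|,|t_j|\}=|a_j|$, so the total cost is $L$.

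\emph{Verification.} The steps to check are that the total identity coefficient equals $\sum_jt_j/2=p/2$, and that the Bloch vector equals $\v a$. Both follow from the defining equations for $c_\pm^{(j)}$. This establishes the upper bound $\max\{p,L\}$ in both cases, which proves the lemma.

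\emph{Main obstacle.} The delicate point is the regime $p<L$, where negative weights are needed but must not increase the $\ell_1$ cost. The trick of tuning each antipodal pair, so that its trace contribution can be anywhere in $[-|a_j|,|a_j|]$ at a fixed cost of $|a_j|$, resolves it.
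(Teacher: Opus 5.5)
Your proof is correct and is essentially the paper's. Your witness $H=\sum_j\operatorname{sgn}(a_j)\sigma_j$ is the dual form of the paper's triangle-inequality step $\|\v a\|_1\le\sum_\ell|x_\ell|\,\|\v v_\ell\|_1$, and your choice $t_j=p|a_j|/L$ gives weights $\tfrac{L+p}{2L}|a_j|$ and $-\tfrac{L-p}{2L}|a_j|$ on each antipodal pair, which is exactly the paper's decomposition with $t_j^{\mathrm{paper}}=\tfrac{L-p}{2L}|a_j|$ (minor aside: the abandoned single-axis fix actually works whenever $|a_j|\ge (L-p)/2$, not only when $|a_j|\ge L-p$, but it still fails in general, e.g.\ for $p=0$, $\v a=(1,1,1)$).
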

\begin{proof}
Let $V_1:=\{\pm \v{e}_x, \pm \v{e}_y, \pm \v{e}_z\}$ be the six one-qubit stabilizer Bloch vectors and set $\v \sigma:=(X,Y,Z)$. For each $\v v_l\in V_1$, we define the corresponding one-qubit stabilizer projector $\omega_{\ell}:=\tfrac12(\iden +\v{v}_{\ell} \cdot \v{\sigma})$ for $\ell = 1, \ldots, 6$. Consider any signed stabilizer decomposition $T=\sum_{\ell} x_\ell \omega_\ell$. If a decomposition contains repeated vertices, we simply combine their coefficients, so it is enough to sum over the six vertices once. Expanding the right-hand side gives $T=\tfrac12[(\sum_\ell x_\ell)\iden +(\sum_\ell x_\ell \v{v}_\ell). \v\sigma]$. On the other hand, $T=\tfrac12(p\iden + \v a \cdot \v \sigma)$. By uniqueness of the Pauli expansion, the identity coefficients and the Bloch vector must agree. Consequently $\sum_\ell x_\ell = p$ and $\sum_\ell x_\ell \v{v}_\ell = \v a$. Consequently, $\sum_\ell|x_\ell| \geq|\sum_\ell x_\ell| = p$. Also, since $\|\v{v}_\ell\|_1 = 1$ for every stabilizer vertex, $L=\|\v a\|_1 = \|\sum_\ell x_\ell \v v_\ell\|_1 \leq \sum_\ell|x_\ell| \|\v v_\ell\|_1 = \sum_\ell|x_\ell|$. Thus, every signed stabilizer decomposition has total absolute coefficient at least $\max\{p,L\}$.

It remains to show that this lower bound can be attained. We consider two cases. First, suppose $L\le p$. For each $j= x, y, z$ with $a_j \neq 0$, put coefficient $|a_j|$ on the vertex $\operatorname{sgn}(a_j) \v e_j$. These terms contribute the Bloch vector $\sum_{j = x, y, z} |a_j|\operatorname{sgn}(a_j) \v e_j = \sum_{j = x, y, z} a_j \v e_j = \v a$ and their total coefficient is $\sum_{j = x, y, z} |a_j| =L$. Since $L\leq p$, there remains a nonnegative trace coefficient $p-L \geq 0$. We assign it equally to any antipodal pair of stabilizer vertices, for example by placing $\tfrac{p-L}{2}$ on $\v e_X$ and $-\v e_X$. This adds total coefficient $p-L$ but contributes zero Bloch vector. Therefore we obtain a nonnegative stabilizer decomposition with total coefficient $L+(p-L) = p$. Since all coefficients are nonnegative, its total absolute coefficient is also $p$. In this case $p=\max\{p,L\}$.

Now suppose $L>p$. Since $L>0$, define $t_j:=\tfrac{L-p}{2L} |a_j|$. Then $0\leq t_j \leq |a_j|$ and $\sum_{j = x, y, z} t_j = \tfrac{L-p}{2L}\sum_{j = x, y, z}|a_j| = \tfrac{L-p}{2}.$ For each $j$ with $a_j\neq0$, put coefficient $|a_j|-t_j$ on the vertex $\operatorname{sgn}(a_j) \v e_j$ and coefficient $-t_j$ on the opposite vertex $-\operatorname{sgn}(a_j)\v e_j$. If $a_j=0$, then $t_j=0$, and we simply omit that pair of vertices. Let us check what this signed decomposition gives. Its contribution to the Bloch vector in the $j$ direction is $(|a_j|-t_j)\operatorname{sgn}(a_j)\v e_j+(-t_j)(-\operatorname{sgn}(a_j)\v e_j)$. The second term equals $t_j \operatorname{sgn}(a_j)\v e_j$, so the total contribution is $|a_j|\operatorname{sgn}(a_j)\v e_j = a_j \v e_j$. Summing over $j = x, y, z$, the total Bloch vector is therefore $\v a$. The total trace coefficient is $\sum_{j = x, y, z}[(|a_j|-t_j)-t_j] = \sum_{j = x, y, z}(|a_j|-2t_j) = L-2\sum_{j = x, y, z}t_j = L-(L-p)=p$. Finally, the total absolute coefficient is $\sum_{j = x, y, z} (||a_j|-t_j|+|-t_j|)$. Since $0\leq t_j\leq|a_j|$, this becomes $\sum_{j = x, y, z}(|a_j|-t_j+t_j) = \sum_{j = x, y, z}|a_j|=L$. In this case, $L=\max\{p,L\}$, so the lower bound is again attained. 
\end{proof}

We now have all the tools needed to state our second main result. 

\begin{thm}[RoM for $N$-qubit \ms{X}-states \label{Thm:RoM-X-states}] For any $N$-qubit \ms{X}-state
\begin{align}\label{Eq:RoM-1}
   \mathcal{R}(\rho) &= \sum_{x\in \mathcal{X}_N} \max\{p_x, L_x\} = 1+2\sum_{x\in \mathcal{X}_N}(s_x-m_x)_+
\end{align}
where $(\bullet)_+ := \max\{0,\bullet\}$.
\end{thm}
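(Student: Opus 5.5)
We prove matching upper and lower bounds, then verify the second equality by algebra.

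\emph{Upper bound.} Fix a sector $x$ with $p_x>0$. The block $\rho_x=\Pi_x\rho\Pi_x$ is an operator on the effective qubit $\mathcal{B}_x$, and by Lemma~\ref{Lem:stabilizer-states-antipodal} the six encoded octahedron states are genuine elements of $\STAB_N$. Lemma~\ref{Lem:subnormalized} applied to $T=\tfrac12(p_x\iden+\v a_x\cdot\v\sigma)$ gives a signed decomposition of $\rho_x$ into these six projectors with total absolute weight $\max\{p_x,L_x\}$. We transport it to $\mathcal{B}_x$ through the identifications $\ketbra{e_X^\pm}=\tfrac12(\Pi_x\pm\sigma_X^{(x)})$ and so on. Since $\rho=\sum_x\rho_x$ by Eq.~\eqref{Eq:X-states}, summing over sectors gives a signed stabilizer decomposition of $\rho$ of cost $\sum_x\max\{p_x,L_x\}$. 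Sectors with $p_x=0$ vanish and contribute nothing.

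\emph{Lower bound.} We use the dual form~\eqref{Eq:dual-form-norm} with the witness
\[
H=\sum_{x\in\mathcal{X}_N}\bigl(\theta_x\Pi_x+(1-\theta_x)\,\v\eta_x\cdot\v\sigma^{(x)}\bigr).
\]
Here $\v\eta_x=(\operatorname{sgn}a_{x,X},\operatorname{sgn}a_{x,Y},\operatorname{sgn}a_{x,Z})$, and we set $\theta_x=1$ if $p_x\ge L_x$ and $\theta_x=0$ otherwise. In each sector $H$ is therefore either $\Pi_x$ or $\v\eta_x\cdot\v\sigma^{(x)}$. With this choice $\tr(H\rho)=\sum_x\bigl(\theta_xp_x+(1-\theta_x)\v\eta_x\cdot\v a_x\bigr)=\sum_x\max\{p_x,L_x\}$.

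\emph{Feasibility of the witness (main step).} We must show $|\tr(H\tau)|\le1$ for every $\tau\in\STAB_N$. By Lemma~\ref{Lem:sector-stabilizer-projection}, $|\tr(\tau\,\v\eta_x\cdot\v\sigma^{(x)})|\le\|\v\eta_x\|_\infty\|\v v^{(x)}(\tau)\|_1\le\pi^{(x)}_\tau$, and trivially $|\tr(\tau\Pi_x)|=\pi^{(x)}_\tau$. Hence
\[
|\tr(H\tau)|\le\sum_x\pi_\tau^{(x)}=\tr\Bigl(\tau\sum_x\Pi_x\Bigr)=1,
\]
because the sectors partition the Hilbert space. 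This is the step where the global stabilizer structure enters, and it reduces entirely to Lemma~\ref{Lem:sector-stabilizer-projection}. Combining the two bounds gives $\mathcal{R}(\rho)=\sum_x\max\{p_x,L_x\}$.

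\emph{Second equality.} Write $\max\{p_x,L_x\}=p_x+(L_x-p_x)_+$. Using $L_x-p_x=2s_x+|d_x|-p_x=2(s_x-m_x)$ and $\sum_xp_x=1$, we get $\mathcal{R}(\rho)=1+2\sum_x(s_x-m_x)_+$.
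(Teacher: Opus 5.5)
Your proposal is correct and follows the paper's proof essentially step by step. The upper bound is the same sector-wise signed decomposition via Lemma~\ref{Lem:subnormalized} and Lemma~\ref{Lem:stabilizer-states-antipodal}, and the lower bound uses the same dual witness ($\Pi_x$ or a sign-aligned encoded Pauli in each sector), with feasibility from Lemma~\ref{Lem:sector-stabilizer-projection}. Your feasibility step, which bounds $|\tr(H\tau)|$ directly by $\sum_x\pi_\tau^{(x)}=1$, is a slightly more streamlined version of the paper's split into $q$ and $1-q$, and the different tie-breaking at $p_x=L_x$ is immaterial.
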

\begin{proof}
We prove the theorem by establishing matching lower and upper bounds.

For the lower bound, we use the dual form in Eq.~\eqref{Eq:dual-form-norm}. For each sector $x\in\mathcal{X}_N$ and each sign vector $\v\epsilon^x=(\epsilon^x_X, \epsilon^x_Y, \epsilon^x_Z) \in \{\pm1\}^3$, define the Hermitian operator $W_x(\v \epsilon^x) := \sum_{j= X,Y,Z} \epsilon_j^x \sigma_j^{(x)}$. The operator $W_x$ is a signed combination of the three encoded Pauli operators in sector $\mathcal{B}_x$. For a subset $A\subseteq \mathcal{X}_N$, define $H_A:=\sum_{x\notin A} \Pi_x + \sum_{x\in A} W_x(\v \epsilon^x)$. Here $A$ is the set of sectors in which we use a signed encoded-Pauli witness. In the remaining sectors, we use the sector projector. The reason for this choice is that $\tr(\Pi_x\rho)=p_x$, whereas, after choosing the signs appropriately, $\tr(W_x(\v\epsilon^x)\rho)=L_x$. Thus $H_A$ will eventually allow us to select the larger of $p_x$ and $L_x$ in every sector. We first show that $H_A$ is dual feasible. Let $\tau\in\STAB_N$ and set $q:=\sum_{x\notin A} \pi_\tau^{(x)}$. Then $\sum_{x\in A} \pi_\tau^{(x)}=1-q$. By the triangle inequality and Lemma~\ref{Lem:sector-stabilizer-projection},
\begin{equation}
    \qty| \sum_{x\in A} \sum_{j= X,Y,Z}\epsilon_j^x v_j^{(x)}(\tau)| \leq \sum_{x\in A} \sum_{j= X,Y,Z}|v_j^{(x)}(\tau)| = \sum_{x\in A} \pi_\tau^{(x)} = 1-q. 
\end{equation}
On the other hand, $\tr(H_A\tau) = q+\sum_{x\in A} \sum_{j= X,Y,Z}\epsilon_j^x v_j^{(x)}(\tau)$. Since the second term lies in the interval $[-(1-q), 1-q]$ it follows that $\tr(H_A\tau) \in [2q-1,1]\subseteq [-1,1]$. Thus $|\tr(H_A\tau)|\leq 1$ for every pure stabilizer projector $\tau$, and therefore $H_A$ is feasible in the dual problem.

We now choose $A$ and the signs so as to maximise the expectation of $H_A$ on $\rho$. Let $A:=\{x\in\mathcal{X}_N: L_x\ge p_x\}$. Thus, in the sectors belonging to $A$, we use the Bloch witness $W_x$, while in the remaining sectors we use the projector $\Pi_x$. For every $x\in A$, choose the signs so that $W_x$ is aligned with the encoded Bloch vector $\v v^{(x)}=(2\alpha_R^{(x)},-2\alpha_I^{(x)},d_x)$. Explicitly, choose $\epsilon_X^x \alpha_R^{(x)}=|\alpha_R^{(x)}|$, $-\epsilon_Y^x \alpha_I^{(x)}= |\alpha_I^{(x)}|$ and $\epsilon_Z^x d_x = |d_x|$ with arbitrary choices when one of the corresponding quantities vanishes. Then $\tr(W_x(\v \epsilon^x)\rho)=2\epsilon_X^x\alpha_R^{(x)}-2\epsilon_Y^x\alpha_I^{(x)}+\epsilon_Z^xd_x = 2|\alpha_R^{(x)}|+2|\alpha_I^{(x)}|+|d_x| = L_x$. For $x\notin A$, $\tr(\Pi_x \rho) = p_x$. Consequently,
\begin{equation}
    \tr(H_A\rho) = \sum_{x\notin A} p_x + \sum_{x \in A} L_x = \sum_{x\in \mathcal{X}_N}\max\{p_x,L_x\}.
\end{equation}
Since $H_A$ is dual feasible, Eq.~\eqref{Eq:dual-form-norm} gives
\begin{equation}\label{Eq:RoM-lower-bound}
    \mathcal{R}(\rho) \geq \sum_{x\in \mathcal{X}_N}\max\{p_x,L_x\}.
\end{equation}

For the upper bound, we now construct a signed stabilizer decomposition attaining the same value. For each sector $x$, the block of $\rho$ supported on $\mathcal{B}_x$ can be written as $\Pi_x \rho \Pi_x =\tfrac12(p_x\Pi_x + \v a_x \cdot \v \sigma^{(x)})$, where $\v a_x=(2\alpha_R^{(x)},-2\alpha_I^{(x)},d_x)$ and $\v \sigma^{(x)} = (\sigma_X^{(x)},\sigma_Y^{(x)}, \sigma_Z^{(x)})$. Its encoded Bloch-vector $\ell_1$-length is $\|\v a_x\|_1 = 2|\alpha_R^{(x)}|+2|\alpha_I^{(x)}| + |d_x| = L_x$. Under the identification $\ket{0} \leftrightarrow \ket{x}$ and $\ket{1}\leftrightarrow \ket{\bx}$, this block is precisely a subnormalized one-qubit operator with trace $p_x$ and Bloch-vector length $L_x$. Lemma~\ref{Lem:subnormalized} gives a signed decomposition of $\Pi_x\rho\Pi_x$ into the six encoded octahedron projectors with total absolute coefficient $\max\{p_x, L_x\}$. By Lemma~\ref{Lem:stabilizer-states-antipodal}, all six encoded projectors are genuine $N$-qubit stabilizer projectors. Performing this decomposition independently in every sector and combining the resulting terms gives a signed stabilizer decomposition of $\rho = \sum_{x\in \mathcal{X}_N} \Pi_x \rho \Pi_x$, with total absolute coefficient $\sum_{x\in\mathcal{X}_N} \max\{p_x,L_x\}$. Hence $\mathcal{R}(\rho) \leq \sum_{x\in\mathcal{X}_N} \max\{p_x, L_x\}$. Combining this inequality with Eq.~\eqref{Eq:RoM-lower-bound} gives
\begin{equation}
    \mathcal{R}(\rho) = \sum_{x\in\mathcal{X}_N}\max\{p_x,L_x\}.
\end{equation}
Finally $\max\{p_x,L_x\} = p_x+(L_x-p_x)_+$. Since $L_x-p_x=2s_x+|d_x|-p_x = 2(s_x-m_x)$, where we used $p_x-|d_x|=2m_x$, we obtain 
\begin{equation}
    \mathcal{R}(\rho) = \sum_{x\in\mathcal{X}_N} p_x + 2\sum_{x\in\mathcal{X}_N} (s_x-m_x)_+ = 1+2\sum_{x\in \mathcal{X}_N}(s_x-m_x)_+.
\end{equation}
This establishes both equalities in Eq.~\eqref{Eq:RoM-1} and completes the proof of Theorem~\ref{Thm:RoM-X-states}.
\end{proof}

Theorem~\ref{Thm:RoM-X-states} gives a closed-form expression for the RoM of an $N$-qubit \ms{X}-state as a sum of independent sector contributions. This raises the question of whether the magic of such a state can grow without bound with the number of qubits. Remarkably, it cannot. The following corollary shows that the magic within this family is uniformly bounded. More precisely, the RoM admits a dimension-independent upper bound, and the states saturating this bound can be characterised exactly.

\begin{cor}[Uniform bound~\label{cor:uniform-bound}] Every $N$-qubit $\ms{X}$-state satisfies
\begin{equation}
    \mathcal{R}(\rho) \leq \sqrt{3}.
\end{equation}
Equality holds if and only if every populated sector, after normalization, is a pure encoded qubit with Bloch vector $\tfrac1{\sqrt{3}}(\pm1,\pm1,\pm1)$. Importantly, the maximal magic $\mathcal{R}(\rho)=\sqrt{3}$, is independent of $N$ and can be attained by pure states for every $N$, and by mixed states for $N\geq 2$.
\end{cor}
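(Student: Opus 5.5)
Starting from Theorem~\ref{Thm:RoM-X-states}, I write $\mathcal{R}(\rho)=\sum_{x}\max\{p_x,L_x\}$ and bound each sector separately. For a populated sector the positivity constraints $r_x,r_{\bx}\ge0$ and $|\alpha_x|^2\le r_xr_{\bx}$ say exactly that the Euclidean length of $\v a_x$ is at most $p_x$. Indeed,
\begin{equation*}
\|\v a_x\|_2^2=4|\alpha_x|^2+(r_x-r_{\bx})^2\le 4r_xr_{\bx}+(r_x-r_{\bx})^2=p_x^2 .
\end{equation*}
Cauchy--Schwarz in $\mathbbm{R}^3$ then gives $L_x=\|\v a_x\|_1\le\sqrt3\,\|\v a_x\|_2\le\sqrt3\,p_x$. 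Hence $\max\{p_x,L_x\}\le\sqrt3\,p_x$, and this also holds trivially when $p_x=0$. Summing over sectors and using $\sum_x p_x=1$ yields $\mathcal{R}(\rho)\le\sqrt3$.

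For the equality case I track where each inequality is tight. Since every sector term is at most $\sqrt3\,p_x$ and the weights sum to one, $\mathcal{R}(\rho)=\sqrt3$ holds iff $\max\{p_x,L_x\}=\sqrt3\,p_x$ for every $x$ with $p_x>0$. Because $p_x<\sqrt3\,p_x$ there, this forces $L_x=\sqrt3\,p_x$. That in turn requires equality in both steps of the chain. Equality $\|\v a_x\|_2=p_x$ means $|\alpha_x|^2=r_xr_{\bx}$, so the normalized sector block is a rank-one, i.e.\ pure, effective-qubit state. Equality in Cauchy--Schwarz means $|a_{x,j}|$ is the same for all three components, which combined with $\|\v a_x\|_2=p_x$ gives $\v a_x/p_x=\tfrac{1}{\sqrt3}(\pm1,\pm1,\pm1)$. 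Conversely, if every populated sector has this form, each term equals $\sqrt3\,p_x$ and the sum is $\sqrt3$. The signs are independent across sectors because the formula involves only absolute values.

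Attainability comes next. For the pure case and any $N$, I take a single sector, say $\mathcal{B}_{0^N}$, and place in it the encoded pure state with Bloch vector $\tfrac1{\sqrt3}(1,1,1)$. Concretely, this is the appropriate GHZ-type superposition $a\ket{0^N}+b\ket{1^N}$, which has $p_x=1$ and $L_x=\sqrt3$. For the mixed case with $N\ge2$ there are $|\mathcal{X}_N|=2^{N-1}\ge2$ sectors. I take an equal mixture of two such encoded states in two distinct sectors. The result has rank two, hence is mixed, and by the equality characterization it still has $\mathcal{R}=\sqrt3$. For $N=1$ there is only one sector, so a mixed state cannot saturate the bound, which explains the restriction $N\ge2$.

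The only point needing care is the equality analysis, specifically the requirement that populated sectors with $L_x<p_x$ would strictly lower the average. This is handled by the strict inequality $p_x<\sqrt3\,p_x$ when $p_x>0$. The rest is routine.
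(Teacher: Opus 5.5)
Your proposal is correct and follows essentially the same argument as the paper. Both use positivity to get $\|\v a_x\|_2\le p_x$, then Cauchy--Schwarz to get $L_x\le\sqrt3\,p_x$, sum using $\sum_x p_x=1$, and run the same equality analysis and attainability constructions. The only minor difference is that you derive $\|\v a_x\|_2\le p_x$ directly from $|\alpha_x|^2\le r_xr_{\bx}$, while the paper reads it off the block eigenvalues $\tfrac12(p_x\pm\|\v a_x\|_2)$.
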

\begin{proof}
For each sector $x\in\mathcal{X}_N$, we write the corresponding block as $\Pi_x\rho\Pi_x = \tfrac12(p_x\Pi_x+\v a_x. \v\sigma^{(x)})$, where $\v a_x\!=\!(2\alpha_R^{(x)},-2\alpha_I^{(x)}, d_x)$. The trace of this block is $p_x$, while the $\ell_1$ length of its encoded Bloch vector is $L_x =\|\v a_x\|_1$. The two eigenvalues of the block are $\tfrac12(p_x\pm \|\v a_x\|_2)$, and consequently, positivity requires $\|\v a_x\|_2\leq p_x$. On the other hand, the Cauchy-Schwarz inequality gives $L_x =\|\v a_x\|_1 \leq \sqrt{3}\|\v a_x\|_2 \leq \sqrt{3}p_x$. Since $p_x\leq \sqrt{3}p_x$, it follows that $\max\{p_x,L_x\} \leq \sqrt{3}p_x$ for every sector. Applying Theorem~\ref{Thm:RoM-X-states} and using $\sum_{x\in\mathcal{X}_N} p_x =1$, we obtain 
\begin{equation}
    \mathcal{R}(\rho) = \sum_{x\in \mathcal{X}_N}\max\{p_x,L_x\} \leq \sqrt{3}\sum_{x\in\mathcal{X}_N}p_x = \sqrt{3}.
\end{equation}
We now characterise the equality case. Suppose first that $\mathcal{R}(\rho) =\sqrt{3}$. Since every sector satisfies $\max\{p_x,L_x\}\leq\sqrt{3}p_x$, and all sector contributions are nonnegative, equality in the total sum requires $\max\{p_x,L_x\}=\sqrt{3}p_x$ for every populated sector $x$, namely every sector with $p_x>0$. Because $p_x<\sqrt{3}p_x$, this is possible only if $L_x=\sqrt{3}p_x$. Consequently, both inequalities $\|\v a_x\|_1\leq \sqrt{3}\|\v a_x\|_2 \leq \sqrt{3}p_x$ must be saturated. Equality in the first inequality holds if and only if the three components of $\v a_x$ have equal absolute value. Equality in the second holds if and only if $\|\v a_2\|_2 = p_x$. The latter condition means that the normalized block $\hat{\rho}_x:=\tfrac{\Pi_x \rho \Pi_x}{p_x}$ has a unit encoded Bloch vector and is therefore pure. Combining the two equality conditions gives $\tfrac{\v{a}_x}{p_x}=\tfrac{1}{\sqrt{3}}(\epsilon_{x,1},\epsilon_{x,2},\epsilon_{x,3})$, with $\epsilon_{x,j}\in\{\pm1\}$. Thus, every populated sector is a pure encoded qubit whose Bloch vector is one of the eight cube vertices $\tfrac{1}{\sqrt{3}}(\pm 1, \pm 1, \pm 1)$.  Conversely, suppose that every populated sector has this form. Then $\|\v a_x\|_1=\sqrt{3}p_x$, and hence $\max\{p_x, L_x\}=\sqrt{3}p_x$ in every populated sector. Therefore $\mathcal{R}(\rho) = \sqrt{3}\sum_{x\in\mathcal{X}_N} p_x =\sqrt{3}$. 

Finally, the bound can be attained by choosing a single populated sector whose normalized encoded Bloch vector is $\tfrac{1}{\sqrt{3}}(\pm 1, \pm 1, \pm 1)$. This gives a pure \ms{X}-state with RoM $\sqrt{3}$. When $N\geq 2$, one may instead populate two or more orthogonal sectors, assigning to each a pure encoded state with such a Bloch vector. The resulting state is mixed, while each sector contributes $\sqrt{3}p_x$, so its total RoM remains $\sqrt{3}$.
\end{proof}

The bound above determines the largest amount of magic that can be attained within the family of $N$-qubit \ms{X}-states. A different question is whether the exact formula in Theorem~\ref{Thm:RoM-X-states} can also be useful beyond this family. The following Corollary answers this question affirmatively. Averaging an arbitrary $N$-qubit state over the even-weight $Z$-type Pauli operators removes all matrix elements except those on the diagonal and antidiagonal, thus producing an \ms{X}-state. Since this averaging is a convex mixture of Clifford conjugations, it cannot increase the RoM. The exact RoM of the resulting \ms{X}-state provides an analytic lower bound on the RoM of the original state. 

\begin{cor}[$X$-twirl lower bound for arbitrary $N$-qubit states] Let $\sigma$ be an arbitrary $N$-qubit state. Define the $\ms{X}$-twirl by
\begin{equation}
    \Lambda_{\ms{X}}(\sigma) :=\frac{1}{2^{N-1}}\sum_{w\in\mathcal{E}_N} Z_w \sigma Z_w\quad \text{with} \quad Z_w:=\bigotimes_{i=1}^NZ_i^{w_i}.
\end{equation}
Here, $\mathcal{E}_N:=\{w\in\{0,1\}^N:|w|\:\text{is even}\}$ is the subgroup of even-weight bit strings. The $N-1$ independent generators $G_i$ generate the Pauli operators $Z_w$ with $w\in\mathcal{E}_N$, so this is the twirl over the parity symmetry. Then $\Lambda_{\ms{X}}(\sigma)$ is an $\ms{X}$-state whose diagonal and antidiagonal coincide with those of $\sigma$. For every $N$-qubit Clifford unitary $U$,
\begin{equation}
    \mathcal{R}(\sigma)\geq 1+ 2\sum_{x\in \mathcal{X}_N}\qty[s_x(\Lambda(U\sigma U^\dagger)) - m_x(\Lambda(U\sigma U^\dagger))]_+.
\end{equation}
\end{cor}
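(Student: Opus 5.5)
The proof splits into three steps: show that $\Lambda_{\ms{X}}$ outputs an $\ms{X}$-state, show that $\Lambda_{\ms{X}}$ and Clifford conjugation cannot increase the RoM, and then invoke Theorem~\ref{Thm:RoM-X-states}.

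\emph{Step 1: the output is an $\ms{X}$-state.} Compute matrix elements in the computational basis. Since $Z_w\ket{y}=(-1)^{w\cdot y}\ket{y}$, we get
\begin{equation*}
\langle y|\Lambda_{\ms{X}}(\sigma)|z\rangle=\langle y|\sigma|z\rangle\,\frac{1}{2^{N-1}}\sum_{w\in\mathcal{E}_N}(-1)^{w\cdot(y\oplus z)}.
\end{equation*}
The map $w\mapsto(-1)^{w\cdot u}$ is a character of the group $\mathcal{E}_N$. It is trivial exactly when $u$ lies in the annihilator of $\mathcal{E}_N$, which is $\{0^N,1^N\}$. So the average equals $1$ if $y\oplus z\in\{0^N,1^N\}$, that is, if $z=y$ or $z=\bar y$, and $0$ otherwise. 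Hence only the diagonal and antidiagonal entries survive, and they are unchanged. The output is a valid state because the map is a mixture of unitary conjugations. By Eq.~\eqref{Eq:X-states}, it is therefore an $\ms{X}$-state. The same character argument shows that the $G_i$ generate all even-weight $Z_w$: the $G_i$ correspond to the vectors $e_i+e_{i+1}$, which are $N-1$ independent elements spanning $\mathcal{E}_N$.

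\emph{Step 2: monotonicity.} Take an optimal signed decomposition $\sigma=\sum_k l_k\tau_k$. Clifford conjugation maps $\STAB_N$ to itself, so
\begin{equation*}
U\sigma U^\dagger=\sum_k l_k\,U\tau_kU^\dagger
\end{equation*}
is a valid decomposition with the same $\ell_1$ norm, and $\mathcal{R}(U\sigma U^\dagger)\le\mathcal{R}(\sigma)$. Next, apply $\Lambda_{\ms{X}}$. This yields
\begin{equation*}
\Lambda_{\ms{X}}(\omega)=\sum_k\sum_{w\in\mathcal{E}_N}\frac{l_k}{2^{N-1}}\,Z_w\tau_kZ_w,
\end{equation*}
again a sum of pure stabilizer projectors, since each $Z_w$ is Pauli and hence Clifford. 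By the triangle inequality its absolute coefficient sum is at most $\sum_k|l_k|$. Therefore $\mathcal{R}[\Lambda_{\ms{X}}(U\sigma U^\dagger)]\le\mathcal{R}(\sigma)$. Equivalently, one could use convexity of $\mathcal{R}$ together with Clifford invariance.

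\emph{Step 3: conclude.} $U\sigma U^\dagger$ is a state, so by Step 1 $\Lambda_{\ms{X}}(U\sigma U^\dagger)$ is an $\ms{X}$-state. Theorem~\ref{Thm:RoM-X-states} then gives its RoM exactly as $1+2\sum_{x}(s_x-m_x)_+$, with $s_x$ and $m_x$ evaluated on that twirled state. Combining this with Step 2 yields the stated inequality.

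No step is a real obstacle. The only point needing care is the character sum in Step 1, specifically identifying the annihilator of $\mathcal{E}_N$ as $\{0^N,1^N\}$. The rest follows directly from the closure of $\STAB_N$ under Clifford conjugation and from the earlier theorem.
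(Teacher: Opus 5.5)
Your proof is correct and follows the paper's argument essentially step for step: the same matrix-element computation, with the annihilator of $\mathcal{E}_N$ identified as $\{0^N,1^N\}$, then RoM monotonicity, then Theorem~\ref{Thm:RoM-X-states}. Your Step~2 just writes out explicitly the convexity-plus-Clifford-invariance argument the paper uses. One small notational slip there: $\omega$ should read $U\sigma U^\dagger$, with decomposition $\sum_k l_k U\tau_k U^\dagger$.
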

\begin{proof}
We first show that the map $\Lambda$ removes every matrix element except those on the diagonal and antidiagonal. Let $y,z\in\{0,1\}^N$ label two computational-basis states. For any $w\in\mathcal{E}_N$, one has $Z_w\ket{y} = (-1)^{w\cdot y} \ket{y}$, where $w\cdot y = \sum_{i=1}^N w_i y_i$ is evaluated modulo two. Consequently $\langle y|Z_w \sigma Z_w|z\rangle = (-1)^{w\cdot y +w\cdot z} \langle y|\sigma|z\rangle = (-1)^{w\cdot(y\oplus z)}\langle y|\sigma|z\rangle$. Averaging over all even-weight strings gives $c(v)=\langle y|\Lambda_{\ms{X}}(\sigma)|z\rangle = c(v)\langle|\sigma|z\rangle$, where $v:=y\oplus z$ and $c(v):=\tfrac{1}{2^{N-1}}\sum_{w\in\mathcal{E}_N} (-1)^{w\cdot v}$. If $w\cdot v=0$ for every $w\in \mathcal{E}_N$, then every term in the sum equals one and $c(v) =1$. Otherwise, there exists $w_0\in\mathcal{E}_N$ such that $w_0 \cdot v=1$. Since $\mathcal{E}_N$ is a subgroup, the map $w\mapsto w\oplus w_0$ is a permutation of $\mathcal{E}_N$. Pairing the terms associated with $w$ and $w\oplus w_0$ gives $(-1)^{w\cdot v}+(-1)^{(w\oplus w_0)\cdot v} = (-1)^{w\cdot v}[1+(-1)^{w_0\cdot v}]=0$. Therefore $c(v)=0$ in this case.

It remains to determine which strings are orthogonal to every even-weight string. The even-weight subgroup is $\mathcal{E}_N=\{w\in\{0,1\}^N:\v 1\cdot w=0\}$, where $\v 1 :=(1,\cdots ,1)$. It has dimension $N-1$, and its orthogonal complement has dimension one. Since $\v 1\cdot w = |w|=0$ modulo two for every $w\in\mathcal{E}_N$, this orthogonal complement is $\mathcal{E}_N^{\perp} = \operatorname{span}\{\v 1\}=\{\v 0, \v 1\}$. It follows that
\begin{equation}
    c(y\oplus z) = \begin{cases*}
1\:\: y\oplus z = \v 0 \:\:\text{or}\:\:\v 1 \\
0, \:\: \text{otherwise}.
\end{cases*}
\end{equation}
The condition $y\oplus z=\v 0$ is equivalent to $z=y$, while $y\oplus z=\v 1$ is equivalent to $z=\bar{y}$. Hence $\Lambda_{\ms{X}}(\sigma)$ retains precisely the diagonal and antidiagonal. In particular, $\Lambda_{\ms{X}}(\sigma)$ is an \ms{X}-state, and its surviving matrix elements coincide with those of $\sigma$.

We now prove the RoM bound. Fix an arbitrary $N$-qubit Clifford unitary $U$ and define $\rho_U:=\Lambda_{\ms{X}}(U\sigma U^\dagger)$. Every $Z_w$ is a Pauli operator and as a result a Clifford unitary. Using the convexity of the RoM, we obtain
\begin{equation}
    \mathcal{R}(\rho_U) = \mathcal{R}\,\qty(\frac{1}{2^{N-1}}\sum_{w\in\mathcal{E}_N}Z_w U\sigma U^{\dagger}Z_w) \leq \frac{1}{2^{N-1}}\sum_{w\in\mathcal{E}_N}\mathcal{R}(Z_w U\sigma U^{\dagger}Z_w).
\end{equation}
The RoM is invariant under Clifford conjugation, so every term in the last sum satisfies $\mathcal{R}(Z_w U\sigma U^{\dagger}Z_w) = \mathcal{R}(U\sigma U^\dagger)=\mathcal{R}(\sigma)$. Since $|\mathcal{E}_N|=2^{N-1}$, it follows that $\mathcal{R}(\rho_U) \leq \mathcal{R}(\sigma)$. Finally, $\rho_U$ is an \ms{X}-state, so Theorem~\ref{Thm:RoM-X-states} gives
\begin{equation}
    \mathcal{R}(\rho_U)  = 1+ 2\sum_{x\in\mathcal{X}_N}[s_x(\rho_U)-m_x(\rho_U)]_+.
\end{equation}
Substituting $\rho_U = \Lambda_{\ms{X}}(U\sigma U^\dagger)$ and combining the last two equations gives,
\begin{equation}
    \mathcal{R}(\sigma) \geq 1+ 2\sum_{x\in\mathcal{X}_N}[s_x(\rho_U)-m_x(\rho_U)]_+.
\end{equation}
Since $U$ is arbitrary, the bound holds for every $N$-qubit Clifford unitary.
\end{proof}

\section{Magic at thermal equilibrium}
\label{S:SM-4}

The $\ms{X}$-state family is characterised by a parity symmetry, namely commutation with all adjacent $Z$ parities $Z_i Z_{i+1}$ for $i=1,\cdots,N-1$. Equivalently, a thermal state $\rho_\beta \propto e^{-\beta H}$ is an $\ms{X}$ state at all $\beta\ge0$ if and only if $[H,Z_i Z_{i+1}]=0$ for $i=1,\cdots, N-1$. Indeed, if $H$ commutes with each parity check, then so does $e^{-\beta H}$. Conversely, for any finite $\beta>0$, the Gibbs state is strictly positive and $H=-\beta^{-1}[\log\rho_\beta+(\log Z_\beta)\iden]$, where $Z_\beta:=\tr(e^{-\beta H})$. Hence commutation of $\rho_\beta$ with the checks implies commutation of $H$ with them. At $\beta=0$, the Gibbs state is maximally mixed for every $H$, so symmetry at that temperature alone imposes no restriction on $H$. This symmetry condition has a simple Pauli-string interpretation. Consider $P=P_1\otimes\cdots\otimes P_N$, with $P_i \in \{\iden, X, Y, Z\}$ and define $t_i =0$ when $P_i \in \{\iden, Z\}$ and $t_i = 1$ when $P_i\in\{X,Y\}$. Since $X$ and $Y$ anticommute with $Z$, whereas $\iden$ and $Z$ commute with it, 
\begin{equation}
    (Z_i Z_{i+1})P = (-1)^{t_i + t_{i+1}}P(Z_i Z_{i+1}).
\end{equation}
Therefore, $P$ commutes with every adjacent check if and only if $t_i = t_{i+1}$ for every $i$. Since conjugation by each check maps every Pauli string to itself up to a sign, linear independence of the Pauli strings implies that this condition must hold term by term in $H$. Every Pauli term in an $N$-qubit $\ms{X}$ Hamiltonian is of exactly one of the following types:
\begin{enumerate}
    \item A diagonal string containing only $\iden$ and $Z$.
    \item A full-support transverse string containing $X$ or $Y$ on every site.
\end{enumerate}
Consequently, an $\ms{X}$-shaped $k$-local Hamiltonian with $k<N$ is diagonal in the computational basis. Its Gibbs state is therefore a convex mixture of computational-basis stabilizer states and has RoM equal to one at every temperature. The most general Hermitian $\ms{X}$ Hamiltonian can thus be expanded as
\begin{equation}
    H_{\ms{X}} = \sum_{w\in\{0,1\}^N} a_w Z_w + \sum_{u\in\{0,1\}^N} b_u Q_u,
\end{equation}
where 
\begin{align}
    Z_w:=\bigotimes_{i=1}^N Z_i^{w_i} \quad \text{and} \quad Q_u:=\bigotimes_{i=1}^N\begin{cases}
        X_i, & u_i = 0, \\
        Y_i, & u_i = 1.
    \end{cases}
\end{align}
and all coefficients are real. The statement is unchanged under local Clifford changes of axes. If $C=\bigotimes_{i} C_i$ is a local Clifford and $CHC^\dagger$ is $\ms{X}$-shaped, then $H$ commutes with the rotated Pauli checks $C^\dagger Z_i Z_{i+1}C$. Local Clifford conjugation preserves both Pauli weight and RoM. 

Suppose the restriction of the Hamiltonian to sector $x$ is $H_x = \epsilon_x \iden - \v h_x \cdot \v \sigma_L$, where $h_x=\|\v h_x\|_2$. Here $\iden$ and $\v\sigma_L$ act on the two-dimensional sector and correspond to $\Pi_x$ and $\v\sigma^{(x)}$ in the full Hilbert space. We take $\beta\geq0$. Since $(\v h_x \cdot \v \sigma_L)^2=h_x^2\iden$, we have for $h_x>0$ that $e^{-\beta H_x} = e^{-\beta \epsilon_X}[\iden \cosh (\beta h_x) + \hat{\v h}_x\cdot \v\sigma_L \sinh (\beta h_x)]$, where $\hat{\v h}_x = \tfrac{\v h_x}{h_x}$. The sector population in the Gibbs state is then
\begin{equation}
    p_x(\beta) = \tr(\Pi_x \rho_\beta) = \frac{2e^{-\beta \epsilon_x}\cosh(\beta h_x)}{Z_\beta}.
\end{equation}
This expression also holds when $h_x=0$. After normalising inside the sector, one obtains the encoded-qubit state $\hat\rho_x = \tfrac{1}{p_x}\Pi_x \rho_\beta \Pi_x = \tfrac{1}{2}[\iden + \tanh(\beta h_x)\hat{\v h}_x \cdot \v \sigma_L]$ for $h_x>0$. Thus, its encoded Bloch vector is $\v r_x(\beta) = \tanh(\beta h_x)\hat{\v h}_x$. When $h_x=0$, the normalized sector state is instead $\hat\rho_x=\tfrac12\iden$ and its encoded Bloch vector vanishes at every temperature.

The one-qubit stabilizer polytope is the octahedron $\| \v r\|_1 \leq 1$. For an encoded qubit, the same criterion applies. Therefore, for $h_x>0$, $\mathcal{R}(\hat\rho_x) = \max\{1, \tfrac{\| \v h_x\|_1}{\| \v h_x\|_2}\tanh(\beta\|\v h_x\|_2)\}$. The sector becomes magical if and only if $\tfrac{\| \v h_x\|_1}{\| \v h_x\|_2}\tanh(\beta\|\v h_x\|_2)>1$. If at least two components of $\v h_x$ are non-zero, then $\|\v h_x\|_1 > \|\v h_x\|_2$, and the critical inverse temperature is
\begin{equation}
    \beta_{\text{crt}}^{(x)} = \frac{1}{\|\v h_x\|_2}\operatorname{atanh}\qty(\frac{\|\v h_x\|_2}{\|\v h_x\|_1}).
\end{equation}
The sector is magical precisely when $\beta>\beta_{\text{crt}}^{(x)}$. If at most one component of $\v h_x$ is non-zero, including $h_x=0$, the sector remains a stabilizer mixture at every temperature and we set $\beta^{(x)}_\text{crt}:=+\infty$. For $h_x>0$, the ratio $\|\v h_x\|_1/\|\v h_x\|_2\in[1,\sqrt{3}]$ determines the dimensionless threshold $\beta_{\text{crt}}^{(x)}h_x$ and the limiting sector RoM as $\beta\to\infty$.

For an $\ms{X}$-state composed of several sectors, the global formula is 
\begin{equation}
 \mathcal{R}(\rho_\beta) = \sum_x p_x \mathcal{R}(\hat\rho_x)= 1+ \sum_{x:h_x>0} p_x\qty[\frac{\| \v h_x\|_1}{\| \v h_x\|_2}\tanh(\beta\|\v h_x\|_2)-1]_+   
\end{equation}
All sector populations are strictly positive at finite $\beta$. Since every term in the last sum is nonnegative, the Gibbs state is magical if and only if at least one sector is magical. Thus its critical inverse temperature is
\begin{equation*}
    \beta_\text{crt}=\min_{x\in\mathcal{X}_N} \beta_\text{crt}^{(x)},
\end{equation*}
where $\mathcal{R}(\rho_\beta)>1$ if and only if $\beta>\beta_\text{crt}$. Each normalized sector RoM is non-decreasing with $\beta$. This does not itself imply monotonicity of the global RoM because the sector populations also depend on $\beta$.
\end{document}